\documentclass[10pt,reqno]{amsart}
\usepackage[hypertex]{hyperref}

\setlength{\textheight}{8.8in}
\setlength{\topmargin}{-.1in}
\setlength{\textwidth}{6in}
\setlength{\oddsidemargin}{.26in}
\setlength{\evensidemargin}{.26in}
\parskip=.08in

\usepackage{amsmath,amsthm}
\usepackage{amssymb}
\usepackage{euscript}
\usepackage[frame,ps,matrix,arrow,curve,rotate,all,2cell,tips]{xy}

\numberwithin{equation}{subsection}

\newcommand{\sqsp}{\renewcommand{\baselinestretch}{1.1}\tiny\normalsize}

\raggedbottom
\tolerance=9000
\hbadness=10000
\hfuzz=1.5pt
\setcounter{secnumdepth}{3}
\setcounter{tocdepth}{2}




\newtheorem{theorem}[subsection]{Theorem}
\newtheorem{lemma}[subsection]{Lemma}
\newtheorem{proposition}[subsection]{Proposition}
\newtheorem{corollary}[subsection]{Corollary}


\theoremstyle{definition}
\newtheorem{definition}[subsection]{Definition}

\newtheorem{remark}[subsection]{Remark}


\newcommand{\fg}{\mathfrak{g}}
\newcommand{\fh}{\mathfrak{h}}
\newcommand{\bC}{\mathbf{C}}
\newcommand{\bk}{\mathbf{k}}

\newcommand{\Qbar}{\overline{\mathbb{Q}}}
\newcommand{\bQ}{\mathbb{Q}}

\def\cyclicsum{{\circlearrowleft}}

\DeclareMathOperator{\Hom}{Hom}
\DeclareMathOperator{\Aut}{Aut}
\DeclareMathOperator{\ad}{ad}
\DeclareMathOperator{\Gal}{Gal}


\def\alphaA{{\begin{pmatrix}1 & c & a\\
-2ab & b & -a^2b\\
-2b^{-1}c & -b^{-1}c^2 & b^{-1}\end{pmatrix}}}

\def\alphaB{{\begin{pmatrix}-1 & c & a\\
2b^{-1}c & -b^{-1}c^2 & b^{-1}\\
2ab & b & -a^2b\end{pmatrix}}}

\def\alphaC{{\begin{pmatrix}c & a & \dfrac{1-c^2}{4a}\\
b & \dfrac{ab}{c-1} & \dfrac{b(1-c)}{4a}\\
\dfrac{1-c^2}{b} & \dfrac{a(1-c)}{b} & \dfrac{(c^2-1)(c+1)}{4ab}\end{pmatrix}}}

\begin{document}

\title{The classical Hom-Yang-Baxter equation and Hom-Lie bialgebras}
\author{Donald Yau}

\begin{abstract}
Motivated by recent work on Hom-Lie algebras and the Hom-Yang-Baxter equation, we introduce a twisted generalization of the classical Yang-Baxter equation (CYBE), called the classical Hom-Yang-Baxter equation (CHYBE). We show how an arbitrary solution of the CYBE induces multiple infinite families of solutions of the CHYBE.  We also introduce the closely related structure of Hom-Lie bialgebras, which generalize Drinfel'd's Lie bialgebras.  In particular, we study the questions of duality and cobracket perturbation and the sub-classes of coboundary and quasi-triangular Hom-Lie bialgebras.
\end{abstract}

\keywords{The classical Hom-Yang-Baxter equation, Hom-Lie bialgebra.}

\subjclass[2000]{16W30, 17A30, 17B37, 17B62, 81R50}

\address{Department of Mathematics\\
    The Ohio State University at Newark\\
    1179 University Drive\\
    Newark, OH 43055, USA}
\email{dyau@math.ohio-state.edu}

\date{\today}
\maketitle

\sqsp

\section{Introduction}


The classical Yang-Baxter equation (CYBE), also known as the classical triangle equation, was introduced by Sklyanin \cite{skl1,skl2} in the context of quantum inverse scattering method \cite{fad1,fad2}.  The CYBE in a Lie algebra $L$ states $[r_{12},r_{13}] + [r_{12},r_{23}] + [r_{13},r_{23}] = 0$ for an $r \in L^{\otimes 2}$.  Here for $r = \sum r_1 \otimes r_2$ and $s = \sum s_1 \otimes s_2 \in L^{\otimes 2}$, the three brackets are defined as $[r_{12},s_{13}] = \sum [r_1,s_1] \otimes r_2 \otimes s_2$, $[r_{12},s_{23}] = \sum r_1 \otimes [r_2,s_1] \otimes s_2$, and $[r_{13},s_{23}] = \sum r_1 \otimes s_1 \otimes [r_2,s_2]$.  Such an $r$ is said to be a solution of the CYBE or a \emph{classical $r$-matrix}.  The CYBE is one of several equations collectively known as the Yang-Baxter equations (YBE), which were first introduced by Baxter, McGuire, and Yang \cite{baxter,baxter2,yang} in statistical mechanics.  The various forms of the YBE and some of their uses in physics are summarized in \cite{perk}.

The CYBE is closely related to many topics in mathematical physics, including Hamiltonian structures \cite{cher1,cher2,gc,gd}, Kac-Moody algebras \cite{dri87,kac}, Poisson-Lie groups, Poisson-Hopf algebras, quantum groups, Hopf algebras, and Lie bialgebras \cite{dri83,dri85,dri87,dri89,sts1,sts2}. There are many known solutions of the CYBE.  For example, each complex semi-simple Lie algebra has a non-trivial classical $r$-matrix \cite{bd1,bd2,bd3,dri83,dri87,jimbo1,jimbo2}.  There are numerous articles in the literature that deal with classical $r$-matrices, e.g., \cite{bm,feld,feld2,michaelis2,nt,stolin1,stolin2}, to name a few.  Classification of solutions of the CYBE, possibly in parametrized form, can be found in \cite{bd1,bd2,bd3,ks}.


The purpose of this paper is to study a twisted generalization of the CYBE and the closely related object of Hom-Lie bialgebra, both of which are motivated by recent work on Hom-Lie algebras and the Hom-Yang-Baxter equation (HYBE).  Before we define our twisted CYBE, let us recall some basic elements about Hom-Lie algebras and the HYBE.  A \emph{Hom-Lie algebra} $(L,[-,-],\alpha)$ consists of a vector space $L$, an anti-symmetric bilinear operation $[-,-] \colon L^{\otimes 2} \to L$, and a linear map $\alpha \colon L \to L$ such that the following conditions are satisfied: (1) $\alpha \circ [-,-] = [-,-] \circ \alpha^{\otimes 2}$ (multiplicativity) and (2) the \emph{Hom-Jacobi identity},
\begin{equation}
\label{eq:HomJacobi}
[[x,y],\alpha(z)] + [[z,x],\alpha(y)] + [[y,z],\alpha(x)] = 0
\end{equation}
for $x,y,z \in L$.  If $\alpha = Id$, then the Hom-Jacobi identity reduces to the usual Jacobi identity, and we have a Lie algebra.  Hom-Lie algebras were introduced in \cite{hls} (without multiplicativity) to describe the structures on some $q$-deformations of the Witt and the Virasoro algebras.  Earlier precursors of Hom-Lie algebras can be found in \cite{hu,liu}.  Hom-Lie algebras are closely related to discrete and deformed vector fields \cite{hls,ls,ls2}.  Further studies of Hom-Lie algebras and related Hom type algebras can be found in \cite{atm,fg,fg2,ms,ms2,ms3,ms4,yau,yau2,yau3,yau4,yau5,yau6}.


In \cite{yau5} the author introduced the Hom-Yang-Baxter equation (HYBE) as a Hom type generalization of the YBE. The HYBE states
\begin{equation}
\label{eq:HYBE}
(\alpha \otimes B) \circ (B \otimes \alpha) \circ (\alpha \otimes B) = (B \otimes \alpha) \circ (\alpha \otimes B) \circ (B \otimes \alpha),
\end{equation}
where $V$ is a vector space, $\alpha \colon V \to V$ is a linear map, and $B \colon V^{\otimes 2} \to V^{\otimes 2}$ is a bilinear (not-necessarily invertible) map that commutes with $\alpha^{\otimes 2}$.  The HYBE reduces to the usual YBE when $\alpha = Id$.  Several classes of solutions of the HYBE were constructed in \cite{yau5}, including those associated to Hom-Lie algebras and Drinfel'd's (dual) quasi-triangular bialgebras \cite{dri87}.  It is also shown in \cite{yau5} that, just like solutions of the YBE, each solution of the HYBE gives rise to operators that satisfy the braid relations \cite{artin,artin2}.  With an additional invertibility condition, these operators give a representation of the braid group.  Additional solutions of the HYBE were constructed in \cite{yau6}, some of which are closely related to the quantum enveloping algebra of $sl(2)$ \cite{dri85,dri87,jimbo,kassel,kr,majid}, the Jones-Conway polynomial \cite{conway,homfly,jones1,jones2}, and Yetter-Drinfel'd modules \cite{rt,yetter2}.


As illustrated by the definitions of Hom-Lie algebras and the HYBE, Hom type structures arise when the identity map is strategically replaced by some twisting map $\alpha$.  With this in mind, we define the \emph{classical Hom-Yang-Baxter equation} (CHYBE) in a Hom-Lie algebra $(L,[-,-],\alpha)$ as
\begin{equation}
\label{eq:chybe}
[[r,r]]^\alpha \buildrel \text{def} \over= [r_{12},r_{13}] + [r_{12},r_{23}] + [r_{13},r_{23}] = 0
\end{equation}
for $r \in L^{\otimes 2}$.  The three brackets in \eqref{eq:chybe} are defined as
\begin{equation}
\label{eq:r123}
\begin{split}
[r_{12},s_{13}] &= \sum [r_1,s_1] \otimes \alpha(r_2) \otimes \alpha(s_2),\quad
[r_{12},s_{23}] = \sum \alpha(r_1) \otimes [r_2,s_1] \otimes \alpha(s_2),\\
[r_{13},s_{23}] &= \sum \alpha(r_1) \otimes \alpha(s_1) \otimes [r_2,s_2],
\end{split}
\end{equation}
where $r = \sum r_1 \otimes r_2$ and $s = \sum s_1 \otimes s_2 \in L^{\otimes 2}$.  If $\alpha = Id$ (i.e., $L$ is a Lie algebra), then the CHYBE reduces to the CYBE: $[[r,r]]^{Id} = 0$.  In this case, a solution of the CHYBE is just a classical $r$-matrix.

We now describe our results about the CHYBE and related algebraic structures.  These results will be proved in later sections.

First we address the question of constructing solutions of the CHYBE.  We go back to Hom-Lie algebras and the HYBE for inspirations.  There is a general strategy, first used in \cite{yau2} and later in \cite{atm,fg2,ms4,yau3,yau4}, to twist an algebraic structure into a corresponding Hom type object via an endomorphism.  In particular, it is not hard to check directly that if $L$ is a Lie algebra and $\alpha \colon L \to L$ is a Lie algebra morphism, then $L_\alpha = (L,[-,-]_\alpha,\alpha)$ is a Hom-Lie algebra, where the twisted bracket $[-,-]_\alpha$ is defined as $\alpha \circ [-,-]$ \cite{yau2}.  There is a similar result about twisting a solution of the YBE into a solution of the HYBE \cite{yau6}.

In section ~\ref{sec:rmatrix}, we show that if $r \in L^{\otimes 2}$ is a solution of the CYBE in the Lie algebra $L$ and if $\alpha \colon L \to L$ is a Lie algebra morphism, then $(\alpha^{\otimes 2})^n(r)$ is a solution of the CHYBE in the Hom-Lie algebra $L_\alpha = (L,[-,-]_\alpha,\alpha)$ for each integer $n \geq 0$ (Theorem ~\ref{thm:chybe}).  In other words, each Lie algebra endomorphism and each classical $r$-matrix induces a (usually infinite) family of solutions of the CHYBE.  This gives an efficient method for constructing many solutions of the CHYBE.  We illustrate this result with the Lie algebra $sl(2)$, equipped with its standard classical $r$-matrix \eqref{eq:slr} \cite{bd1,bd2,bd3,dri83,dri87,jimbo1,jimbo2}.  We compute the solutions $(\alpha^{\otimes 2})^n(r)$ of the CHYBE for \emph{all} the Lie algebra endomorphisms $\alpha$ on $sl(2)$ (Propositions ~\ref{prop:sl1} - ~\ref{prop:sl3}), making use of the classification of these maps obtained in \cite{yau6}.  There is a distinct property of these solutions of the CHYBE that is worth mentioning.  In fact, on the one hand, the standard classical $r$-matrix on $sl(2)$ lies in a two-dimensional subspace of $sl(2)^{\otimes 2}$.  On the other hand, all nine dimensions in $sl(2)^{\otimes 2}$ are involved in describing $(\alpha^{\otimes 2})^n(r)$ for the various endomorphisms on $sl(2)$ (Remark ~\ref{remark:sl}).


As Drinfel'd explains in \cite{dri87}, classical $r$-matrices often arise in conjunction with the richer structure of a Lie bialgebra, which consists of a Lie algebra that also has a Lie coalgebra structure, in which the cobracket is a $1$-cocycle in Chevalley-Eilenberg cohomology (\eqref{eq:compatibility} with $\alpha = Id$).  Dualizing the definition of a Hom-Lie algebra, one can define a Hom-Lie coalgebra (Definition ~\ref{def:hlc}), as was done in  \cite{ms4}.  We generalize Drinfel'd's Lie bialgebra and define \emph{Hom-Lie bialgebra} (Definition ~\ref{def:hlb}), in which the cobracket satisfies an analogous cocycle condition \eqref{eq:compatibility}.  In fact, the condition \eqref{eq:compatibility} says exactly that the cobracket in a Hom-Lie bialgebra is a $1$-cocycle in Hom-Lie algebra cohomology (Remark ~\ref{rk:cocycle}).

In section ~\ref{sec:hl}, we show that an arbitrary Lie bialgebra can be twisted into a Hom-Lie bialgebra via any Lie bialgebra endomorphism (Theorem ~\ref{thm:hlbdeform}).  This gives an efficient method for constructing Hom-Lie bialgebras.  When the twisting maps are invertible, we give a group-theoretic criterion under which two such Hom-Lie bialgebras are isomorphic (Proposition ~\ref{prop:twist} - Corollary ~\ref{cor2:twist}).  Using these results, we observe that \emph{uncountably many} mutually non-isomorphic Hom-Lie bialgebras $sl(2)_\alpha$ can be constructed this way from the Lie bialgebra $sl(2)$ (Corollary ~\ref{cor1:HLsl} and Corollary ~\ref{cor2:HLsl}).  We also show that Hom-Lie bialgebras have a self-dual property.  Namely, for a finite dimensional Hom-Lie bialgebra, its linear dual is also a Hom-Lie bialgebra with the dual bracket, cobracket, and $\alpha$ (Theorem ~\ref{thm;duality}).  These results are illustrated with the example of $sl(2)$, regarded as a Lie bialgebra (Proposition ~\ref{prop:slbimorphism} - Corollary ~\ref{cor:dual}).


The connection between classical $r$-matrices and Lie bialgebras comes from the sub-classes of coboundary and quasi-triangular Lie bialgebras \cite{dri87}.  Briefly, a \emph{coboundary Lie bialgebra} $(L,r)$ is a Lie bialgebra $L$ in which the cobracket $\Delta$ is a $1$-coboundary in Chevalley-Eilenberg cohomology, i.e., $\Delta = \ad(r)$ for some $r \in L^{\otimes 2}$ (see \eqref{eq:Delta=ad} with $\alpha = Id$).  A \emph{quasi-triangular Lie bialgebra} is a coboundary Lie bialgebra $(L,r)$ in which $r$ is also a classical $r$-matrix (i.e., $r$ satisfies \eqref{eq:chybe} with $\alpha = Id$).

In section ~\ref{sec;cob}, replacing the identity map by a general twisting map $\alpha$, we define \emph{coboundary Hom-Lie bialgebras} and \emph{quasi-triangular Hom-Lie bialgebras} (Definition ~\ref{def:coboundary}) as analogous sub-classes of Hom-Lie bialgebras.  In a coboundary Hom-Lie bialgebra, the cobracket $\Delta$ is a $1$-coboundary $\ad(r)$ in Hom-Lie algebra cohomology (Remark ~\ref{rk:delta}).  A quasi-triangular Hom-Lie bialgebra is a coboundary Hom-Lie bialgebra in which $r$ is also a solution of the CHYBE \eqref{eq:chybe}. These Hom type objects can be constructed by twisting coboundary and quasi-triangular Lie bialgebras, respectively, via suitable endomorphisms (Theorem ~\ref{thm:cob}).  For example, the Hom-Lie bialgebras $sl(2)_\alpha$ in Corollary ~\ref{cor1:HLsl} are all quasi-triangular Hom-Lie bialgebras (Corollary ~\ref{cor:sl2quasi}).  We then describe conditions under which a Hom-Lie algebra $L$ and an element $r \in L^{\otimes 2}$ give a coboundary or a quasi-triangular Hom-Lie bialgebra (Theorem ~\ref{thm:cob-char} and Corollary ~\ref{cor:cob}).  Going a step further, given a coboundary Hom-Lie bialgebra, we give several characterizations of when it is a quasi-triangular Hom-Lie bialgebra (Theorem ~\ref{thm:quasi}), i.e., when $r$ is a solution of the CHYBE.

In section ~\ref{sec:perturb}, we study cobracket perturbation in (quasi-triangular) Hom-Lie bialgebras, following the perturbation theory initiated by Drinfel'd for quasi-Hopf algebras \cite{dri83b,dri89b,dri90,dri91,dri92}.  In particular, we describe conditions under which the cobracket in a Hom-Lie bialgebra can be perturbed by a coboundary to give another Hom-Lie bialgebra (Theorem ~\ref{thm:perturb} and Corollary ~\ref{cor1:perturb}).  There is a similar result about cobracket perturbation in a quasi-triangular Hom-Lie bialgebra (Corollary ~\ref{cor:perturb}).

\section{Solutions of the CHYBE from classical $r$-matrices}
\label{sec:rmatrix}


\subsection{Conventions}
Throughout the rest of this paper, we work over a fixed field $\bk$ of characteristic $0$.  Vector spaces, tensor products, linearity, and $\Hom$ are all meant over $\bk$. If $f \colon V \to V$ is a linear self-map on a vector space $V$, then $f^n \colon V \to V$ denotes the composition $f \circ \cdots \circ f$ of $n$ copies of $f$, with $f^0 = Id$. For an element $r = \sum r_1 \otimes r_2 \in V^{\otimes 2}$, the summation sign will often be omitted in computations to simplify the typography.


The first result shows that, given a Lie algebra endomorphism, each classical $r$-matrix induces a (usually infinite) family of solutions of the CHYBE.  Afterwards, we will illustrate this result with the Lie algebra $sl(2)$.

\begin{theorem}
\label{thm:chybe}
Let $L$ be a Lie algebra, $r \in L^{\otimes 2}$ be a solution of the CYBE, and $\alpha \colon L \to L$ be a Lie algebra morphism.  Then for each integer $n \geq 0$, $(\alpha^{\otimes 2})^n(r)$ is a solution of the CHYBE \eqref{eq:chybe} in the Hom-Lie algebra $L_\alpha = (L,[-,-]_\alpha = \alpha \circ [-,-],\alpha)$.
\end{theorem}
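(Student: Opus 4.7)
The plan is a direct computation: write out $(\alpha^{\otimes 2})^n(r)$ explicitly, plug it into the three summands of the CHYBE in $L_\alpha$, use that $\alpha$ is a Lie algebra morphism to pull all the $\alpha$'s to the outside, and recognize the result as $(\alpha^{\otimes 3})^{n+1}$ applied to the CYBE expression for $r$ in $L$, which vanishes by hypothesis.

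In more detail, set $r^{(n)} = (\alpha^{\otimes 2})^n(r) = \sum \alpha^n(r_1) \otimes \alpha^n(r_2)$ and take a second copy $s^{(n)} = \sum \alpha^n(s_1) \otimes \alpha^n(s_2)$ where $s = r$. From the definitions \eqref{eq:r123} applied in $L_\alpha$, together with the fact that the bracket on $L_\alpha$ is $[-,-]_\alpha = \alpha \circ [-,-]$ and the twisting map is $\alpha$ itself, the $(1,3)$-summand becomes
\begin{equation*}
[r^{(n)}_{12}, s^{(n)}_{13}]_\alpha = \sum \alpha\bigl([\alpha^n(r_1),\alpha^n(s_1)]\bigr) \otimes \alpha^{n+1}(r_2) \otimes \alpha^{n+1}(s_2).
\end{equation*}
Because $\alpha$ is a Lie algebra morphism, $\alpha^n$ commutes with the bracket, so the first factor equals $[\alpha^{n+1}(r_1),\alpha^{n+1}(s_1)]$, and therefore
\begin{equation*}
[r^{(n)}_{12}, s^{(n)}_{13}]_\alpha = (\alpha^{\otimes 3})^{n+1}\Bigl(\sum [r_1,s_1] \otimes r_2 \otimes s_2\Bigr) = (\alpha^{\otimes 3})^{n+1}\bigl([r_{12},s_{13}]_L\bigr),
\end{equation*}
where the inner bracket on the right is computed in the Lie algebra $L$. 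The same argument applies verbatim to the $(12,23)$- and $(13,23)$-summands: in each case the three $\alpha$'s present in \eqref{eq:r123} plus the extra $\alpha$ from $[-,-]_\alpha$ combine with the $2n$ copies of $\alpha$ coming from $r^{(n)}$ and $s^{(n)}$ to yield exactly $(\alpha^{\otimes 3})^{n+1}$ applied to the corresponding CYBE summand for $r$ in $L$.

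Summing the three summands and using the linearity of $(\alpha^{\otimes 3})^{n+1}$ gives
\begin{equation*}
[[r^{(n)}, r^{(n)}]]^\alpha = (\alpha^{\otimes 3})^{n+1}\bigl([r_{12},r_{13}]_L + [r_{12},r_{23}]_L + [r_{13},r_{23}]_L\bigr) = (\alpha^{\otimes 3})^{n+1}(0) = 0,
\end{equation*}
since $r$ satisfies the CYBE in $L$. This proves the theorem.

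There is no real obstacle here: the only thing to watch is careful bookkeeping of the powers of $\alpha$, checking that in every summand the three types of $\alpha$-contributions (two from the CHYBE formula \eqref{eq:r123}, one from the twisted bracket $[-,-]_\alpha$, and $2n$ from $r^{(n)}$) organize themselves into a single $(\alpha^{\otimes 3})^{n+1}$ applied to the untwisted CYBE summand. Verifying the $(12,23)$-summand is the mildly trickiest case only because the bracket sits in the middle tensor slot, but the manipulation is identical to the $(1,3)$ case above.
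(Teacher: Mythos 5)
Your proposal is correct and follows essentially the same route as the paper's proof: both pull the twisted bracket $[-,-]_\alpha = \alpha\circ[-,-]$ and the $\alpha$'s from \eqref{eq:r123} through the morphism property of $\alpha$ to rewrite $[[(\alpha^{\otimes 2})^n(r),(\alpha^{\otimes 2})^n(r)]]^\alpha$ as $(\alpha^{\otimes 3})^{n+1}$ applied to $[[r,r]]^{Id}$, which vanishes by the CYBE. The only cosmetic difference is that you treat the three summands one at a time while the paper handles them in a single displayed computation.
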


\begin{proof}
We already mentioned in the introduction that $L_\alpha$ is a Hom-Lie algebra, a fact that is not hard to check directly \cite{yau2}.  (In fact, the Hom-Jacobi identity for $[-,-]_\alpha$ is $\alpha^2$ applied to the Jacobi identity of $[-,-]$.)  It remains to show that $(\alpha^{\otimes 2})^n(r)$ satisfies the CHYBE \eqref{eq:chybe} in the Hom-Lie algebra $L_\alpha$, i.e., $[[(\alpha^{\otimes 2})^n(r),(\alpha^{\otimes 2})^n(r)]]^\alpha = 0$.

Write $r = \sum r_1 \otimes r_2 \in L^{\otimes 2}$, and let $r' = \sum r_1' \otimes r_2'$ be another copy of $r$.  Using $\alpha([-,-]) = [-,-] \circ \alpha^{\otimes 2}$ and the definition $[-,-]_\alpha = \alpha([-,-])$, we have
\[
\begin{split}
[[(\alpha^{\otimes 2})^n(r),(\alpha^{\otimes 2})^n(r)]]^\alpha
&= [\alpha^n(r_1),\alpha^n(r_1')]_\alpha \otimes \alpha(\alpha^n(r_2)) \otimes \alpha(\alpha^n(r_2'))\\
&\relphantom{} + \alpha(\alpha^n(r_1)) \otimes [\alpha^n(r_2),\alpha^n(r_1')]_\alpha \otimes \alpha(\alpha^n(r_2'))\\
&\relphantom{} + \alpha(\alpha^n(r_1)) \otimes \alpha(\alpha^n(r_1')) \otimes [\alpha^n(r_2),\alpha^n(r_2')]_\alpha\\
&= \alpha^{n+1}([r_1,r_1']\otimes r_2 \otimes r_2' + r_1 \otimes [r_2,r_1'] \otimes r_2' + r_1 \otimes r_1' \otimes [r_2,r_2'])\\
&= \alpha^{n+1}([[r,r]]^{Id}) = 0.
\end{split}
\]
In the last line above, the CYBE $[[r,r]]^{Id}=0$ is taking place in the Lie algebra $L$.
\end{proof}


Theorem ~\ref{thm:chybe} is useful as long as we can compute endomorphisms of interesting Lie algebras and the induced solutions $(\alpha^{\otimes 2})^n(r)$ of the CHYBE.  Let us illustrate Theorem ~\ref{thm:chybe} with the complex Lie algebra $sl(2)$ \cite[p.13-14]{jac}.  This is the three-dimensional complex Lie algebra with basis $\{H,X_+,X_-\}$, whose Lie bracket is determined by
\begin{equation}
\label{slbracket}
[X_+,X_-] = H \quad\text{and}\quad
[H,X_\pm] = \pm 2X_\pm.
\end{equation}
Our notations follow \cite[Example 8.1.10]{majid}.  The Lie algebra $sl(2)$ has a standard non-trivial solution of the CYBE defined as \cite{bd1,bd2,bd3,dri83,dri87,jimbo1,jimbo2}
\begin{equation}
\label{eq:slr}
r = X_+ \otimes X_- + \frac{1}{4}H \otimes H.
\end{equation}
Note that this $r$ lies in a two-dimensional subspace of $sl(2)^{\otimes 2}$.  We will describe all the solutions of the CHYBE in the Hom-Lie algebra $sl(2)_\alpha$ of the form $(\alpha^{\otimes 2})^n(r)$.

Let us first recall from \cite{yau6} the classification of Lie algebra endomorphisms on $sl(2)$.  With respect to the basis $\{H,X_+,X_-\}$ of $sl(2)$, a non-zero linear map $\alpha \colon sl(2) \to sl(2)$ is a Lie algebra morphism if and only if its matrix has one of the following three forms (where $a,b,c$ are complex numbers):
\begin{subequations}
\allowdisplaybreaks
\begin{align}
\alpha_1 & = \alphaA \quad\text{with $b \not= 0$ and $ac = 0$},\label{alpha1}\\
\alpha_2 & = \alphaB \quad\text{with $b \not= 0$ and $ac = 0$},\label{alpha2}\\
\alpha_3 & = \alphaC \quad\text{with $ab \not= 0$ and $c \not= \pm 1$}.\label{alpha3}
\end{align}
\end{subequations}
Now we describe the solutions $(\alpha^{\otimes 2})^n(r)$ of the CHYBE in the Hom-Lie algebra $sl(2)_\alpha$ for all these maps $\alpha$.  For the rest of this section, $r \in sl(2)^{\otimes 2}$ is the classical $r$-matrix in \eqref{eq:slr}.

\begin{proposition}
\label{prop:sl1}
Suppose $\alpha = \alpha_1$ \eqref{alpha1}.  For $n \geq 1$, we have
\begin{equation}
\label{eq:alpha1r}
\begin{split}
(\alpha^{\otimes 2})^n(r)
&= r
+ (ab)\left\{\biggl(\sum_{i=0}^{n-1}b^i\biggr)\left(\frac{1}{2}X_+ \otimes H - 2H \otimes X_+\right) + 3ab^2d_n X_+ \otimes X_+\right\}\\
&+ cb^{-1}\left\{\biggl(\sum_{i=0}^{n-1}b^{-i}\biggr)\left(\frac{1}{2}H \otimes X_- - 2X_- \otimes H\right) + 3b^{-2}ce_n X_- \otimes X_-\right\},
\end{split}
\end{equation}
where $d_1 = e_1 = 0$,
\begin{equation}
\label{eq:den}
d_{n+1} = b^2d_n + \sum_{i=0}^{n-1}b^i, \quad\text{and}\quad
e_{n+1} = b^{-2}e_n + \sum_{i=0}^{n-1}b^{-i}.
\end{equation}
\end{proposition}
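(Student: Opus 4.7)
The proof is by induction on $n \geq 1$. Before starting, I would read off from the matrix \eqref{alpha1} the action of $\alpha_1$ on the basis:
$\alpha_1(H) = H - 2ab\,X_+ - 2b^{-1}c\,X_-$, $\alpha_1(X_+) = cH + bX_+ - b^{-1}c^2 X_-$, $\alpha_1(X_-) = aH - a^2 b\,X_+ + b^{-1}X_-$.
The hypothesis $ac = 0$ kills every scalar containing both an $a$ and a $c$, which decouples the two halves of the claimed formula: the ``$ab$-part'' (nontrivial only when $c=0$) and the ``$cb^{-1}$-part'' (nontrivial only when $a=0$) never interact, since any cross-term would carry at least one factor of $ac$. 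It therefore suffices to verify the $ab$-part under the assumption $c=0$; the $cb^{-1}$-part follows by the symmetric calculation, interchanging the roles of $X_+$ and $X_-$ and replacing $b$ by $b^{-1}$.

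For the base case $n=1$, I would directly expand $\alpha_1^{\otimes 2}(r) = \alpha_1^{\otimes 2}(X_+\otimes X_-) + \tfrac{1}{4}\alpha_1^{\otimes 2}(H\otimes H)$ and collect terms. Under $ac=0$, the contributions to $X_+\otimes X_+$ from the two pieces cancel (and likewise for $X_-\otimes X_-$), consistent with the initial values $d_1 = e_1 = 0$. What remains matches \eqref{eq:alpha1r} at $n=1$, where the sum $\sum_{i=0}^{0}b^i = 1$.

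For the inductive step, I would apply $\alpha_1^{\otimes 2}$ term by term to the expression for $(\alpha^{\otimes 2})^n(r)$ given by \eqref{eq:alpha1r}. Under $c=0$, the inputs needed are the elementary formulas $\alpha_1^{\otimes 2}(X_+\otimes X_+) = b^2 X_+\otimes X_+$, $\alpha_1^{\otimes 2}(H\otimes X_+) = bH\otimes X_+ - 2ab^2 X_+\otimes X_+$, and $\alpha_1^{\otimes 2}(X_+\otimes H) = bX_+\otimes H - 2ab^2 X_+\otimes X_+$, together with the analogous ones on the $X_-$ side under $a = 0$. The identity $\sum_{i=0}^{n}b^i = 1 + b\sum_{i=0}^{n-1}b^i$ handles the advance of the geometric sum in the $H\otimes X_+$ and $X_+\otimes H$ coefficients. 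The coefficient of $X_+\otimes X_+$ at level $n+1$ gathers three contributions: $b^2$ times the old $X_+\otimes X_+$ coefficient, plus two $-2ab^2$ contributions arising from the $H\otimes X_+$ and $X_+\otimes H$ terms at level $n$. Matching this total against $3a^2b^3 d_{n+1}$ yields exactly the recursion $d_{n+1} = b^2 d_n + \sum_{i=0}^{n-1}b^i$ of \eqref{eq:den}; the parallel computation on the $X_-$ side produces the companion recursion for $e_n$.

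The main obstacle is the bookkeeping across the nine basis tensors of $sl(2)^{\otimes 2}$, but the hypothesis $ac=0$ is precisely what tames it: it decouples the two halves and annihilates all the cross terms that would otherwise appear. Once that observation is in hand, the inductive step reduces to the routine verification that the geometric sum and the quantity $d_n$ (respectively $e_n$) advance in the manner prescribed by \eqref{eq:den}, and the only nontrivial mechanism is the way $\alpha_1^{\otimes 2}$ feeds the $H\otimes X_+$ and $X_+\otimes H$ coefficients back into the $X_+\otimes X_+$ coefficient at the next level.
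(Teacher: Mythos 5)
Your proposal is correct and follows essentially the same route as the paper: split on $ac=0$ into the cases $c=0$ and $a=0$, record how $\alpha^{\otimes 2}$ acts on the few relevant basis tensors, and induct, with the geometric-sum identity $\sum_{i=0}^{n}b^i = 1 + b\sum_{i=0}^{n-1}b^i$ advancing the $X_+\otimes H$ and $H\otimes X_+$ coefficients and the feedback of those terms into $X_+\otimes X_+$ producing exactly the recursion \eqref{eq:den}. The paper merely packages your three elementary formulas into the single computation \eqref{eq:slalpha1} for the combination $\tfrac12 X_+\otimes H - 2H\otimes X_+$, but the bookkeeping is the same as yours.
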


For the maps $\alpha_2$ and $\alpha_3$, let us introduce a notation that will simplify the typography below.  If $Y,Z \in \{H,X_+,X_-\}$ are basis elements of $sl(2)$, we set
\begin{equation}
\label{absolute}
|Y \otimes Z| = Y \otimes Z - Z \otimes Y \in sl(2)^{\otimes 2}.
\end{equation}
For example, $|X_+ \otimes X_-| = X_+ \otimes X_- - X_- \otimes X_+$.

\begin{proposition}
\label{prop:sl2}
Suppose $\alpha = \alpha_2$ \eqref{alpha2}.  Using the notation ~\eqref{absolute}, for $n \geq 0$, we have
\begin{equation}
\label{eq:alpha2r}
(\alpha^{\otimes 2})^n(r)
= r + j_n|X_+\otimes X_-| + k_n|H\otimes X_+| + l_n|H\otimes X_-|,
\end{equation}
where $j_0 = k_0 = l_0 = 0$,
\[
\begin{split}
j_{n+1} &=
-1-j_n+2(a+c)k_n,\\
k_{n+1} &=
\begin{cases}b^{-1}\left(\dfrac{c}{2} + cj_n - c^2k_n - l_n\right) &\text{ if $a=0$},\\
b^{-1}l_n &\text{ if $c=0$},\end{cases}\\
l_{n+1} &=
-b\left(\dfrac{a}{2} + aj_n + k_n - a^2l_n\right).
\end{split}
\]
\end{proposition}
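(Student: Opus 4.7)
The plan is to proceed by induction on $n$. The base case $n=0$ is immediate, since $j_0 = k_0 = l_0 = 0$ makes the right side of \eqref{eq:alpha2r} equal to $r = (\alpha^{\otimes 2})^0(r)$. For the induction step I would apply $\alpha_2^{\otimes 2}$ to both sides of the inductive formula and check that the result again has the shape $r + j|X_+\otimes X_-| + k|H\otimes X_+| + l|H\otimes X_-|$, with coefficients $j_{n+1}, k_{n+1}, l_{n+1}$ given by the asserted recursions.

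Two structural observations cut the bookkeeping down before any calculation begins. First, $\alpha_2^{\otimes 2}$ commutes with the flip $\tau \colon x \otimes y \mapsto y \otimes x$, so the three-dimensional antisymmetric subspace $\Lambda = \mathrm{span}\{|H \otimes X_+|, |H \otimes X_-|, |X_+ \otimes X_-|\}$ is $\alpha_2^{\otimes 2}$-stable; in particular, $\alpha_2^{\otimes 2}$ of any of the $|\cdot|$ basis tensors has no $r$-component. Second, since $sl(2)$ is simple, the nonzero Lie algebra endomorphism $\alpha_2$ is automatically an automorphism, hence preserves the Killing form and the Casimir. The symmetric part of $r$, namely $\tfrac{1}{2}(X_+ \otimes X_- + X_- \otimes X_+) + \tfrac{1}{4}H \otimes H$, is a scalar multiple of the Casimir, so $\alpha_2^{\otimes 2}(r) - r$ already lies in $\Lambda$. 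Together these facts imply that the four-dimensional span of $\{r, |X_+ \otimes X_-|, |H \otimes X_+|, |H \otimes X_-|\}$ is $\alpha_2^{\otimes 2}$-stable, which is what makes the inductive ansatz persist.

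With closure established, the remainder is an explicit coefficient calculation using the columns of \eqref{alpha2},
\begin{align*}
\alpha_2(H) &= -H + 2b^{-1}c\, X_+ + 2ab\, X_-,\\
\alpha_2(X_+) &= cH - b^{-1}c^2 X_+ + b X_-,\\
\alpha_2(X_-) &= aH + b^{-1} X_+ - a^2 b X_-,
\end{align*}
together with $ac = 0$ to kill many cross terms. I would expand $\alpha_2^{\otimes 2}$ on each of the four basis tensors, rewrite each result in the basis $\{r, |X_+\otimes X_-|, |H\otimes X_+|, |H\otimes X_-|\}$, and read off the coefficients of the three $|\cdot|$ terms in $\alpha_2^{\otimes 2}$ of the inductive expression. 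Matching those against $j_{n+1}, k_{n+1}, l_{n+1}$ closes the induction.

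The main obstacle is routine but delicate. Each $\alpha_2^{\otimes 2}$ expansion produces up to nine tensor terms and collapses only after invoking $ac = 0$. The case split in the formula for $k_{n+1}$ arises exactly here: the coefficient of $|H\otimes X_+|$ receives genuinely different contributions depending on whether $a=0$ or $c=0$, whereas the coefficients defining $j_{n+1}$ and $l_{n+1}$ admit a uniform description across both cases. Keeping the antisymmetrisation straight when re-expressing each $\alpha_2^{\otimes 2}(|Y\otimes Z|)$ inside $\Lambda$ is the principal source of error to guard against.
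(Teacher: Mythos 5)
Your proposal is correct and follows essentially the same route as the paper, which proves Proposition~\ref{prop:sl1} by exactly this kind of induction (expand $\alpha^{\otimes 2}$ on the handful of relevant basis tensors, observe the span is stable, and read off the recursions) and then notes that Propositions~\ref{prop:sl2} and~\ref{prop:sl3} follow by the very similar argument you outline. Your preliminary observation that the closure of the four-dimensional span is forced a priori --- because $\alpha_2$ is an automorphism of the simple algebra $sl(2)$, hence fixes the Casimir (the symmetric part of $r$), while $\alpha_2^{\otimes 2}$ commutes with the flip and so preserves $\Lambda$ --- is a clean addition not made in the paper, but the substance of the proof (the coefficient bookkeeping yielding $j_{n+1},k_{n+1},l_{n+1}$, including the $a=0$ versus $c=0$ split in $k_{n+1}$) is the same.
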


\begin{proposition}
\label{prop:sl3}
Suppose $\alpha = \alpha_3$ \eqref{alpha3}.  Using the notation ~\eqref{absolute}, for $n \geq 0$, we have
\begin{equation}
\label{eq:alpha3r}
(\alpha^{\otimes 2})^n(r)
= r + p_n|X_+\otimes X_-| + q_n|H\otimes X_+| + s_n|H\otimes X_-|,
\end{equation}
where $p_0 = q_0 = s_0 = 0$,
\[
\begin{split}
p_{n+1} &=
\frac{c-1}{2} + cp_n + 2aq_n + \left(\frac{c^2-1}{2a}\right)s_n,\\
q_{n+1} &=
\frac{b}{4}\left(1 + 2p_n + \left(\frac{4a}{c-1}\right)q_n + \left(\frac{c-1}{a}\right)s_n\right),\\
s_{n+1} &=
\frac{c^2-1}{4b}\left(1 + 2p_n + \left(\frac{4a}{c+1}\right)q_n + \left(\frac{c+1}{a}\right)s_n\right).
\end{split}
\]
\end{proposition}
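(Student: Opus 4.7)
The plan is induction on $n$, tracking how $\alpha^{\otimes 2}$ acts on the spanning set $\{r,|X_+\otimes X_-|,|H\otimes X_+|,|H\otimes X_-|\}$ of the relevant subspace of $sl(2)^{\otimes 2}$. The base case $n=0$ is immediate since $p_0=q_0=s_0=0$. For the inductive step, linearity gives
\[
(\alpha^{\otimes 2})^{n+1}(r) = \alpha^{\otimes 2}(r) + p_n\alpha^{\otimes 2}(|X_+\otimes X_-|) + q_n\alpha^{\otimes 2}(|H\otimes X_+|) + s_n\alpha^{\otimes 2}(|H\otimes X_-|),
\]
so it suffices to expand each of the four images in the basis $\{r,|X_+\otimes X_-|,|H\otimes X_+|,|H\otimes X_-|\}$ and to match the coefficients of the three antisymmetric generators with $p_{n+1}$, $q_{n+1}$, $s_{n+1}$.

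Two easy observations keep the ansatz closed. First, each $\alpha^{\otimes 2}(|Y\otimes Z|) = |\alpha(Y)\otimes\alpha(Z)|$ is already antisymmetric, so it lies in the span of $|X_+\otimes X_-|$, $|H\otimes X_+|$, $|H\otimes X_-|$. Second, $\alpha^{\otimes 2}(r)-r$ is antisymmetric as well: since $sl(2)$ is simple, the nonzero morphism $\alpha=\alpha_3$ is automatically an automorphism, so it preserves the Killing form and hence the Casimir, which up to a scalar is the symmetric element $X_+\otimes X_- + X_-\otimes X_+ + \tfrac12 H\otimes H = 2r - |X_+\otimes X_-|$; therefore the symmetric part of $\alpha^{\otimes 2}(r)-r$ vanishes.

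The bulk of the work is the explicit computation of a $3\times 3$ transition matrix and an inhomogeneous vector. Using the columns of $\alpha_3$ in \eqref{alpha3} as the coordinates of $\alpha(H),\alpha(X_+),\alpha(X_-)$ together with the bilinear identity $|u\otimes v| = \sum_{i<j}(u_iv_j - u_jv_i)|e_i\otimes e_j|$, each required coefficient is a $2\times 2$ minor of $\alpha_3$. One expects the transition matrix with rows $(c,\ 2a,\ (c^2-1)/(2a))$, $(b/2,\ ab/(c-1),\ b(c-1)/(4a))$, and $((c^2-1)/(2b),\ (c-1)a/b,\ (c^2-1)(c+1)/(4ab))$, and the inhomogeneous vector $((c-1)/2,\ b/4,\ (c^2-1)/(4b))$ coming from $\alpha^{\otimes 2}(r)-r$. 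Pulling $b/4$ out of the second row and $(c^2-1)/(4b)$ out of the third reproduces the factored form displayed in the statement, completing the induction.

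The main obstacle is purely algebraic bookkeeping: several entries of $\alpha_3$ have denominators $a$, $b$, $c-1$, or $c+1$, so the nine minors and the three minors contributing to $\alpha^{\otimes 2}(r)-r$ must be simplified by repeated use of $c^2-1=(c-1)(c+1)$ and identities such as $(1-c^2)(c+1)=-(c-1)(c+1)^2$. No conceptual obstruction arises past this tedium; once the transition matrix and constant vector are verified entrywise, the induction step collapses to matrix multiplication.
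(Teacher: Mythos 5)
Your proof is correct and follows the same basic skeleton as the paper's argument (the paper proves only Proposition~\ref{prop:sl1} in detail and declares \ref{prop:sl2}--\ref{prop:sl3} ``very similar'': induction on $n$, computing $\alpha^{\otimes 2}$ explicitly on a small spanning set and matching coefficients against the recursion). The one genuinely different ingredient is your a priori explanation of why the ansatz closes up: since $sl(2)$ is simple, $\alpha_3$ is an automorphism, hence preserves the Killing form and the Casimir element $2r-|X_+\otimes X_-|$, so $\alpha^{\otimes 2}(r)-r$ is automatically antisymmetric and the whole computation reduces to a $3\times 3$ transition matrix on the span of $|X_+\otimes X_-|$, $|H\otimes X_+|$, $|H\otimes X_-|$ plus an inhomogeneous vector equal to $\tfrac12\bigl(\alpha^{\otimes 2}(|X_+\otimes X_-|)-|X_+\otimes X_-|\bigr)$. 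The paper obtains the shape of the formula only by brute-force verification, so your observation both explains the form of \eqref{eq:alpha3r} and cuts the number of minors to compute. I checked your transition matrix and inhomogeneous vector entrywise against $\alpha_3$ in \eqref{alpha3}; they are correct and reproduce the stated recursions for $p_{n+1}$, $q_{n+1}$, $s_{n+1}$ exactly, so the induction goes through. The only stylistic caveat is that you present the matrix as ``expected'' rather than derived; in a final write-up you should actually display the twelve $2\times 2$ minors, but there is no gap in the argument.
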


\begin{remark}
\label{remark:sl}
Note that in Proposition ~\ref{prop:sl1}, $(\alpha^{\otimes 2})^n(r)$ in general lies in a five-dimensional subspace of $sl(2)^{\otimes 2}$, since either $a = 0$ or $c = 0$.  Moreover, eight of the nine basis elements in $sl(2)^{\otimes 2}$ are used in \eqref{eq:alpha1r}.  Likewise, in Propositions ~\ref{prop:sl2} and ~\ref{prop:sl3}, $(\alpha^{\otimes 2})^n(r)$ in general lies in a seven-dimensional subspace of $sl(2)^{\otimes 2}$.
\end{remark}

Propositions ~\ref{prop:sl1} - ~\ref{prop:sl3} are proved by very similar induction arguments, so we will only give the proof of Proposition ~\ref{prop:sl1} in details.

\begin{proof}[Proof of Proposition ~\ref{prop:sl1}]
In $\alpha = \alpha_1$ \eqref{alpha1}, either $a=0$ or $c=0$.  Suppose $c=0$.  In this case, we have $\alpha(H) = H - 2abX_+$, $\alpha(X_+) = bX_+$, and $\alpha(X_-) = aH - a^2bX_+ + b^{-1}X_-$.  By direct computation, we obtain
\begin{equation}
\label{eq:slalpha1}
\begin{split}
\alpha^{\otimes 2}(r) &=
r + (ab)\left(\frac{1}{2}X_+ \otimes H - 2H\otimes X_+\right),\\
\alpha^{\otimes 2}\left(\frac{1}{2}X_+ \otimes H - 2H\otimes X_+\right) &=
b\left(\frac{1}{2}X_+ \otimes H - 2H \otimes X_+ + 3abX_+ \otimes X_+\right),\\
\alpha^{\otimes 2}(X_+ \otimes X_+) &= b^2X_+ \otimes X_+.
\end{split}
\end{equation}
In particular, \eqref{eq:alpha1r} holds when $n = 1$ and $c = 0$.  Inductively, suppose \eqref{eq:alpha1r} holds for some $n \geq 1$ (still with $c = 0$).  Using \eqref{eq:slalpha1} we have
\[
\begin{split}
(\alpha^{\otimes 2})^{n+1}(r) &=
\alpha^{\otimes 2}(r) +
(ab)\biggl(\sum_{i=0}^{n-1}b^i\biggr)\alpha^{\otimes 2}\left(\frac{1}{2}X_+ \otimes H - 2H\otimes X_+\right)
+ (ab)(3ab^2d_n)\alpha^{\otimes 2}(X_+ \otimes X_+)\\
&= r + (ab)\left(\frac{1}{2}X_+ \otimes H - 2H\otimes X_+\right)  + (ab)(3ab^2d_n)b^2X_+ \otimes X_+\\
&\relphantom{} + (ab)\biggl(\sum_{i=0}^{n-1}b^i\biggr)b\left(\frac{1}{2}X_+ \otimes H - 2H \otimes X_+ + 3abX_+ \otimes X_+\right)\\
&= r + (ab)\left(1 + \sum_{i=0}^{n-1}b^{i+1}\right)\left(\frac{1}{2}X_+ \otimes H - 2H\otimes X_+\right)\\
&\relphantom{} + (ab)(3ab^2)\left(b^2d_n + \sum_{i=0}^{n-1}b^i\right)X_+ \otimes X_+.
\end{split}
\]
Comparing this with the definition \eqref{eq:den} of $d_{n+1}$, we conclude that the formula \eqref{eq:alpha1r} holds for the case $n+1$ as well.  This proves \eqref{eq:alpha1r} when $c = 0$.  The case $a = 0$ is proved similarly.
\end{proof}

\section{Hom-Lie bialgebras}
\label{sec:hl}

In this section, we introduce Hom-Lie bialgebra, which is the Hom version of Drinfel'd's Lie bialgebra \cite{dri83,dri87}.  The connections between Hom-Lie bialgebras and the CHYBE \eqref{eq:chybe} will be explored in the next two sections.  Here we observe that Lie bialgebras can be twisted along any endomorphism to produce Hom-Lie bialgebras (Theorem ~\ref{thm:hlbdeform}).  When the twisting maps are invertible, we give a group-theoretic characterization of when two such Hom-Lie bialgebras are isomorphic (Proposition ~\ref{prop:twist} - Corollary ~\ref{cor2:twist}).  Then we show that the dual of a finite dimensional Hom-Lie bialgebra is also a Hom-Lie bialgebra (Theorem ~\ref{thm;duality}), generalizing the self-dual property of Lie bialgebras.  At the end of this section, we illustrate these results with the Lie bialgebra $sl(2)$ (Proposition ~\ref{prop:slbimorphism} - Corollary ~\ref{cor:dual}).


\subsection{Notations}
\label{notations}
Let $V$ and $W$ be vector spaces.
\begin{enumerate}
\item
Denote by $\tau \colon V \otimes W \to W \otimes V$ the twist isomorphism, $\tau(x \otimes y) = y \otimes x$.
\item
The symbol $\cyclicsum$ denotes the cyclic sum in three variables.  In other words, if $\sigma$ is the cyclic permutation $(1 \, 2 \, 3)$, then $\cyclicsum$ is the sum over $Id$, $\sigma$, and $\sigma^2$.  With this notation, the Hom-Jacobi identity ~\eqref{eq:HomJacobi} can be restated as: $\cyclicsum [-,-]\circ ([-,-]\otimes \alpha) = 0$.
\item
For a linear map $\Delta \colon V \to V^{\otimes 2}$, we use Sweedler's notation $\Delta(x) = \sum_{(x)} x_1 \otimes x_2$ for $x \in V$.  We will often omit the summation sign $\sum_{(x)}$ to simplify the typography.
\item
Denote by $V^* = \Hom(V,\bk)$ the linear dual of $V$.  For $\phi \in V^*$ and $x \in V$, we often use the adjoint notation $\langle \phi,x\rangle$ for $\phi(x) \in \bk$.
\item
For an element $x$ in a Hom-Lie algebra $(L,[-,-],\alpha)$ and $n \geq 2$, define the adjoint map $\ad_x \colon L^{\otimes n} \to L^{\otimes n}$ by
\begin{equation}
\label{eq:ad}
\ad_x(y_1 \otimes \cdots \otimes y_n) = \sum_{i=1}^n \alpha(y_1) \otimes \cdots \otimes \alpha(y_{i-1}) \otimes [x,y_i] \otimes \alpha(y_{i+1}) \cdots \otimes \alpha(y_n).
\end{equation}
Conversely, given $\gamma = y_1 \otimes \cdots \otimes y_n$, we define the map $\ad(\gamma) \colon L \to L^{\otimes n}$ by $\ad(\gamma)(x) = \ad_x(\gamma)$.
\end{enumerate}

First we define the dual objects of Hom-Lie algebras.

\begin{definition}[\cite{ms4}]
\label{def:hlc}
A \textbf{Hom-Lie coalgebra} $(L,\Delta,\alpha)$ consists of a vector space $L$, a linear self $\alpha \colon L \to L$, and a linear map $\Delta \colon L \to L^{\otimes 2}$ such that (1) $\Delta \circ \alpha = \alpha^{\otimes 2} \circ \Delta$ (co-multiplicativity), (2) $\tau \circ \Delta = -\Delta$ (anti-symmetry), and (3) $\cyclicsum (\alpha \otimes \Delta) \circ \Delta = 0$ (Hom-co-Jacobi identity).  We call $\Delta$ the \textbf{cobracket}.
\end{definition}

A Hom-Lie coalgebra with $\alpha = Id$ is exactly a Lie coalgebra \cite{michaelis}.
Just like (co)associative (co)algebras, if $(L,\Delta,\alpha)$ is a Hom-Lie coalgebra, then $(L^*,[-,-],\alpha)$ is a Hom-Lie algebra, as defined in the introduction (the paragraph containing \eqref{eq:HomJacobi}).  Here $[-,-]$ and $\alpha$ in $L^*$ are dual to $\Delta$ and $\alpha$, respectively, in $L$.  Conversely, if $(L,[-,-],\alpha)$ is a \emph{finite dimensional} Hom-Lie algebra, then $(L^*,\Delta,\alpha)$ is a Hom-Lie coalgebra, where $\Delta$ and $\alpha$ in $L^*$ are dual to $[-,-]$ and $\alpha$, respectively, in $L$.  These facts are also special cases of \cite[Propositions 4.10 and 4.11]{ms4}.


\begin{definition}
\label{def:hlb}
A \textbf{Hom-Lie bialgebra} $(L,[-,-],\Delta,\alpha)$ is a quadruple in which $(L,[-,-],\alpha)$ is a Hom-Lie algebra and $(L,\Delta,\alpha)$ is a Hom-Lie coalgebra such that the following compatibility condition holds for all $x, y \in L$:
\begin{equation}
\label{eq:compatibility}
\Delta([x,y]) = \ad_{\alpha(x)}(\Delta(y)) - \ad_{\alpha(y)}(\Delta(x)).
\end{equation}
A \textbf{morphism} $f \colon L \to L'$ of Hom-Lie bialgebras is a linear map such that $\alpha \circ f = f \circ \alpha$, $f([-,-]) = [-,-] \circ f^{\otimes 2}$, and $\Delta \circ f = f^{\otimes 2} \circ \Delta$.    An \textbf{isomorphism} of Hom-Lie bialgebras is an invertible morphism of Hom-Lie bialgebras.  Two Hom-Lie bialgebras are said to be \textbf{isomorphic} if there exists an isomorphism between them.
\end{definition}

A Hom-Lie bialgebra with $\alpha = Id$ is exactly a Lie bialgebra, as defined by Drinfel'd \cite{dri83,dri87}.  One can also use this as the definition of a Lie bialgebra.  We can unwrap the compatibility condition \eqref{eq:compatibility} as
\begin{equation}
\label{eq:compat}
\Delta([x,y]) = [\alpha(x),y_1] \otimes \alpha(y_2) + \alpha(y_1) \otimes [\alpha(x),y_2]
- \left([\alpha(y),x_1] \otimes \alpha(x_2) + \alpha(x_1) \otimes [\alpha(y),x_2]\right).
\end{equation}

\begin{remark}
\label{rk:cocycle}
The compatibility condition \eqref{eq:compatibility} is, in fact, a cocycle condition in Hom-Lie algebra cohomology \cite[section 5]{ms3}, just as it is the case in a Lie bialgebra (with Lie algebra cohomology) \cite{dri87}.  Indeed, we can regard $L^{\otimes 2}$ as an ``$L$-module" via the $\alpha$-twisted adjoint action \eqref{eq:ad}:
\begin{equation}
\label{eq:L2}
\begin{split}
x \cdot (y_1 \otimes y_2)
&= \ad_{\alpha(x)}(y_1 \otimes y_2)\\
&= [\alpha(x),y_1] \otimes \alpha(y_2) + \alpha(y_1) \otimes [\alpha(x),y_2]
\end{split}
\end{equation}
for $x \in L$ and $y_1 \otimes y_2 \in L^{\otimes 2}$.  Then we can think of the cobracket $\Delta \colon L \to L^{\otimes 2}$ as a $1$-cochain $\Delta \in C^1(L,L^{\otimes 2})$ ($=$ the linear subspace of $\Hom(L,L^{\otimes 2})$ consisting of maps that commute with $\alpha$).  Generalizing \cite[Definition 5.3]{ms3} to include coefficients in $L^{\otimes 2}$, the differential on $\Delta$ is given by
\begin{equation}
\label{eq:1cobound}
\begin{split}
(\delta^1_{HL}\Delta)(x,y)
&= \Delta([x,y]) - x \cdot \Delta(y) + y \cdot \Delta(x)\\
&= \Delta([x,y]) - \ad_{\alpha(x)}(\Delta(y)) + \ad_{\alpha(y)}(\Delta(x)).
\end{split}
\end{equation}
So \eqref{eq:compatibility} says exactly that $\Delta \in C^1(L,L^{\otimes 2})$ is a $1$-cocycle.
\end{remark}

The following result shows that a Lie bialgebra deforms into a Hom-Lie bialgebra along any Lie bialgebra morphism.  It gives an efficient method for constructing Hom-Lie bialgebras.


\begin{theorem}
\label{thm:hlbdeform}
Let $(L,[-,-],\Delta)$ be a Lie bialgebra and $\alpha \colon L \to L$ be a Lie bialgebra morphism.  Then $L_\alpha = (L,[-,-]_\alpha,\Delta_\alpha,\alpha)$ is a Hom-Lie bialgebra, where $[-,-]_\alpha = \alpha \circ [-,-]$ and $\Delta_\alpha = \Delta \circ \alpha$.
\end{theorem}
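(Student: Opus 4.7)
The plan is to verify directly the three axioms of a Hom-Lie bialgebra for $L_\alpha$: that $(L,[-,-]_\alpha,\alpha)$ is a Hom-Lie algebra, that $(L,\Delta_\alpha,\alpha)$ is a Hom-Lie coalgebra, and that the compatibility condition \eqref{eq:compatibility} relating $[-,-]_\alpha$ and $\Delta_\alpha$ holds. The whole proof is a matter of bookkeeping: since $\alpha$ is a Lie bialgebra morphism, every occurrence of $\alpha$ can be slid past a bracket or cobracket, so each axiom for $L_\alpha$ will reduce to a sufficiently high power $(\alpha^{\otimes n})^k$ applied to the corresponding classical axiom in $L$.

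The first part is already implicit in the proof of Theorem~\ref{thm:chybe}: multiplicativity $\alpha\circ[-,-]_\alpha=[-,-]_\alpha\circ\alpha^{\otimes 2}$ is an immediate consequence of $\alpha$ being a Lie algebra morphism, and the Hom-Jacobi identity for $[-,-]_\alpha$ is $\alpha^2$ applied to the Jacobi identity of $[-,-]$. The Hom-Lie coalgebra axioms for $(L,\Delta_\alpha,\alpha)$ are dual: co-multiplicativity $\Delta_\alpha\circ\alpha=\alpha^{\otimes 2}\circ\Delta_\alpha$ follows from $\Delta\circ\alpha=\alpha^{\otimes 2}\circ\Delta$; anti-symmetry of $\Delta_\alpha$ is inherited pointwise from that of $\Delta$; and for the Hom-co-Jacobi identity one pushes $\alpha$ through using $\Delta\circ\alpha=\alpha^{\otimes 2}\circ\Delta$ to obtain
\[
(\alpha\otimes\Delta_\alpha)\circ\Delta_\alpha \;=\; \alpha^{\otimes 3}\circ(Id\otimes\Delta)\circ\Delta\circ\alpha,
\]
after which the cyclic sum commutes with $\alpha^{\otimes 3}$ and with $\alpha$ and vanishes by co-Jacobi in $L$.

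The main step is the compatibility \eqref{eq:compatibility}, which I would prove by showing both sides equal $(\alpha^{\otimes 2})^2\Delta([x,y])$. The left-hand side expands as
\[
\Delta_\alpha([x,y]_\alpha) \;=\; \Delta(\alpha^2([x,y])) \;=\; (\alpha^{\otimes 2})^2\Delta([x,y])
\]
by two applications of $\Delta\circ\alpha=\alpha^{\otimes 2}\circ\Delta$. For the right-hand side, one unwraps $\ad_{\alpha(x)}(\Delta_\alpha(y))$ via \eqref{eq:ad} in the Hom-Lie algebra $(L,[-,-]_\alpha,\alpha)$, writes $\Delta_\alpha(y)=\alpha(y_1)\otimes\alpha(y_2)$, and uses the key identity
\[
[\alpha(x),\alpha(y_i)]_\alpha \;=\; \alpha[\alpha(x),\alpha(y_i)] \;=\; \alpha^2[x,y_i],
\]
which relies precisely on $\alpha$ being a Lie algebra morphism. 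Each of the four resulting terms acquires exactly two copies of $\alpha$ in each tensor slot, so after symmetrizing in $x,y$ the right-hand side becomes $(\alpha^{\otimes 2})^2$ applied to the right-hand side of the Lie bialgebra compatibility for $(L,[-,-],\Delta)$, and the latter coincides with $(\alpha^{\otimes 2})^2\Delta([x,y])$ by the original compatibility in $L$.

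No step is a serious obstacle; the only care needed is in the compatibility step, to track that the two $\alpha$'s from $\ad$ (the one hidden in the subscript $\alpha(x)$ and the one that twists the surviving tensor factor) combine with the two $\alpha$'s from $\Delta_\alpha=\Delta\circ\alpha$ and $[-,-]_\alpha=\alpha\circ[-,-]$ to produce exactly the same power of $\alpha^{\otimes 2}$ on both sides.
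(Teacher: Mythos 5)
Your proposal is correct and follows essentially the same route as the paper's proof: both reduce each axiom for $L_\alpha$ to the corresponding classical axiom in $L$ by sliding $\alpha$ past the bracket and cobracket, and both establish compatibility by showing that the two sides of \eqref{eq:compatibility} equal $(\alpha^{\otimes 2})^2$ applied to the two sides of the Lie bialgebra compatibility condition. Your expression $\alpha^{\otimes 3}\circ(Id\otimes\Delta)\circ\Delta\circ\alpha$ for the iterated cobracket is the same as the paper's $(\alpha^{\otimes 3})^2\circ(Id\otimes\Delta)\circ\Delta$ after one more application of co-multiplicativity, so the arguments coincide in every essential respect.
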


\begin{proof}
As we mentioned earlier, it is easy to check directly that $(L,[-,-]_\alpha,\alpha)$ is a Hom-Lie algebra \cite{yau2}.  A similar argument shows that $(L,\Delta_\alpha,\alpha)$ is a Hom-Lie coalgebra.  In fact, the Hom-co-Jacobi identity for $\Delta_\alpha$ follows from that for $\Delta$ and the facts that $(\alpha \otimes \Delta_\alpha) \circ \Delta_\alpha = (\alpha^{\otimes 3})^2 \circ (Id \otimes \Delta) \circ \Delta$ and that $\alpha^{\otimes 3}$ commutes with any permutation.

It remains to prove the compatibility condition \eqref{eq:compatibility} for $\Delta_\alpha$ and $[-,-]_\alpha$.  In this case, the left-hand side of \eqref{eq:compatibility} is
\begin{equation}
\label{eq:compatLHS}
\Delta_\alpha([x,y]_\alpha) = \Delta \circ \alpha^2 \circ [x,y] = (\alpha^{\otimes 2})^2(\Delta([x,y])),
\end{equation}
since $\Delta \circ \alpha = \alpha^{\otimes 2} \circ \Delta$.  Using in addition $\alpha \circ [-,-] = [-,-] \circ \alpha^{\otimes 2}$, we have
\begin{equation}
\label{eq:compatRHS}
\begin{split}
\ad_{\alpha(x)}(\Delta_\alpha(y)) &=
[\alpha(x),\alpha(y)_1]_\alpha \otimes \alpha(\alpha(y)_2) + \alpha(\alpha(y)_1) \otimes [\alpha(x),\alpha(y)_2]_\alpha\\
&= (\alpha^{\otimes 2})^2\left([x,y_1] \otimes y_2 + y_1 \otimes [x,y_2]\right).
\end{split}
\end{equation}
It follows from \eqref{eq:compatLHS} and \eqref{eq:compatRHS} that
\[
\Delta_\alpha([x,y]_\alpha) = \ad_{\alpha(x)}(\Delta_\alpha(y)) - \ad_{\alpha(y)}(\Delta_\alpha(x)),
\]
since \eqref{eq:compatibility} with $\alpha = Id$ is assumed to hold.
\end{proof}

Next we consider when Hom-Lie bialgebras of the form $L_\alpha$, as in Theorem ~\ref{thm:hlbdeform}, are isomorphic.

\begin{proposition}
\label{prop:twist}
Let $\fg$ and $\fh$ be Lie bialgebras and $\alpha \colon \fg \to \fg$ and $\beta \colon \fh \to \fh$ be Lie bialgebra morphisms with $\beta$ and $\beta^{\otimes 2}$ injective.  Then the following statements are equivalent:
\begin{enumerate}
\item
The Hom-Lie bialgebras $\fg_\alpha$ and $\fh_\beta$ (as in Theorem ~\ref{thm:hlbdeform}) are isomorphic.
\item
There exists a Lie bialgebra isomorphism $\gamma \colon \fg \to \fh$ such that $\gamma\alpha = \beta\gamma$.
\end{enumerate}
\end{proposition}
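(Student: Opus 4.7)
The plan is to prove the two implications separately, with the forward direction being essentially formal and the reverse direction needing the injectivity hypothesis on $\beta$ and $\beta^{\otimes 2}$ in an essential way.

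For the implication (2) $\Rightarrow$ (1), I would take the Lie bialgebra isomorphism $\gamma \colon \fg \to \fh$ satisfying $\gamma\alpha = \beta\gamma$ and show directly that $\gamma$ is an isomorphism $\fg_\alpha \to \fh_\beta$ of Hom-Lie bialgebras. Compatibility with the twisting maps is hypothesized. The bracket condition follows by computing
\[
\gamma \circ [-,-]_\alpha = \gamma \circ \alpha \circ [-,-]_\fg = \beta \circ \gamma \circ [-,-]_\fg = \beta \circ [-,-]_\fh \circ \gamma^{\otimes 2} = [-,-]_\beta \circ \gamma^{\otimes 2},
\]
using $\gamma\alpha = \beta\gamma$ and the fact that $\gamma$ preserves $[-,-]$. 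The cobracket condition is the dual computation, using $\Delta_\fg \circ \alpha = \alpha^{\otimes 2} \circ \Delta_\fg$ and the analogue for $\fh$.

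For the implication (1) $\Rightarrow$ (2), suppose $\gamma \colon \fg_\alpha \to \fh_\beta$ is an isomorphism of Hom-Lie bialgebras. Then $\gamma\alpha = \beta\gamma$ is built in. I would then recover the fact that $\gamma$ preserves the \emph{un-twisted} bracket as follows. The Hom-Lie bialgebra bracket condition gives $\gamma \circ \alpha \circ [-,-]_\fg = \beta \circ [-,-]_\fh \circ \gamma^{\otimes 2}$, which via $\gamma\alpha = \beta\gamma$ rewrites as $\beta \circ \gamma \circ [-,-]_\fg = \beta \circ [-,-]_\fh \circ \gamma^{\otimes 2}$. Canceling $\beta$ on the left (which is where the injectivity of $\beta$ is used) yields $\gamma \circ [-,-]_\fg = [-,-]_\fh \circ \gamma^{\otimes 2}$. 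Dually, from $\Delta_\fh \circ \beta \circ \gamma = \gamma^{\otimes 2} \circ \Delta_\fg \circ \alpha$, I would push $\beta$ and $\alpha$ through the cobrackets to obtain $\beta^{\otimes 2} \circ \Delta_\fh \circ \gamma = (\gamma\alpha)^{\otimes 2} \circ \Delta_\fg = (\beta\gamma)^{\otimes 2} \circ \Delta_\fg = \beta^{\otimes 2} \circ \gamma^{\otimes 2} \circ \Delta_\fg$, and then cancel $\beta^{\otimes 2}$ on the left (which is where its injectivity is used) to get $\Delta_\fh \circ \gamma = \gamma^{\otimes 2} \circ \Delta_\fg$. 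Hence $\gamma$ is a Lie bialgebra morphism, and since it is invertible, a Lie bialgebra isomorphism intertwining $\alpha$ and $\beta$.

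The main subtlety, rather than an obstacle, is simply being careful about which injectivity is needed where: injectivity of $\beta$ is used to recover the Lie algebra condition, and injectivity of $\beta^{\otimes 2}$ is used to recover the Lie coalgebra condition. Everything else is formal manipulation using the multiplicativity, co-multiplicativity, and morphism properties of $\alpha$ and $\beta$.
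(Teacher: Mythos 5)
Your proposal is correct and follows essentially the same route as the paper's proof: the direction (1) $\Rightarrow$ (2) is handled by precomposing/postcomposing with $\beta$ and $\beta^{\otimes 2}$ and cancelling by injectivity, exactly as in the paper, and the converse is the same formal verification. No gaps.
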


\begin{proof}
To show that the first statement implies the second statement, suppose that $\gamma \colon \fg_\alpha \to \fh_\beta$ is an isomorphism of Hom-Lie bialgebras.  Then $\gamma\alpha = \beta\gamma$ automatically.  To see that $\gamma$ is a Lie bialgebra isomorphism, first we check that it commutes with the Lie brackets.  For any two elements $x$ and $y$ in $\fg$, we have
\[
\beta\gamma[x,y] = \gamma\alpha[x,y] = \gamma([x,y]_\alpha) = [\gamma(x),\gamma(y)]_\beta = \beta[\gamma(x),\gamma(y)].
\]
Since $\beta$ is injective, we conclude that $\gamma[x,y] = [\gamma(x),\gamma(y)]$, i.e., $\gamma$ is a Lie algebra isomorphism.

To check that $\gamma$ commutes with the Lie cobrackets, we compute as follows:
\[
\begin{split}
\beta^{\otimes 2}(\gamma^{\otimes 2}(\Delta(x))) &= (\beta\gamma)^{\otimes 2}(\Delta(x)) = (\gamma\alpha)^{\otimes 2}(\Delta(x)) = \gamma^{\otimes 2}(\alpha^{\otimes 2}(\Delta(x)))\\
&= \gamma^{\otimes 2}(\Delta_\alpha(x)) = \Delta_\beta(\gamma(x)) = \beta^{\otimes 2}(\Delta(\gamma(x))).
\end{split}
\]
The injectivity of $\beta^{\otimes 2}$ now implies that $\gamma$ commutes with the Lie cobrackets.  Therefore, $\gamma$ is a Lie bialgebra isomorphism.

The other implication is proved by a similar argument, much of which is already given above.
\end{proof}

For a Lie bialgebra $\fg$, let $\Aut(\fg)$ be the group of Lie bialgebra isomorphisms from $\fg$ to $\fg$.  In Proposition ~\ref{prop:twist}, restricting to the case $\fg = \fh$ with $\alpha$ and $\beta$ both invertible, we obtain the following special case.

\begin{corollary}
\label{cor:twist}
Let $\fg$ be a Lie bialgebra and $\alpha, \beta \in \Aut(\fg)$.  Then the Hom-Lie bialgebras $\fg_\alpha$ and $\fg_\beta$ (as in Theorem ~\ref{thm:hlbdeform}) are isomorphic if and only if $\alpha$ and $\beta$ are conjugate in $\Aut(\fg)$.
\end{corollary}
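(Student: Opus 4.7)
The plan is to derive this corollary as a direct specialization of Proposition~\ref{prop:twist}. First I would verify that the hypotheses of the proposition are satisfied in this setting: taking $\fh = \fg$ and letting $\beta \in \Aut(\fg)$, the automorphism $\beta$ is by definition a Lie bialgebra isomorphism, hence invertible, so $\beta$ is injective and $\beta^{\otimes 2}$ is injective as well (the tensor square of a linear isomorphism is a linear isomorphism). Thus Proposition~\ref{prop:twist} applies verbatim.

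Next, I would invoke the equivalence provided by Proposition~\ref{prop:twist}: the Hom-Lie bialgebras $\fg_\alpha$ and $\fg_\beta$ are isomorphic if and only if there exists a Lie bialgebra isomorphism $\gamma \colon \fg \to \fg$ satisfying $\gamma \alpha = \beta \gamma$. Since $\gamma$ is a Lie bialgebra isomorphism from $\fg$ to itself, we have precisely $\gamma \in \Aut(\fg)$.

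Finally, I would translate the intertwining relation into conjugation. Because $\gamma \in \Aut(\fg)$ is invertible, the equation $\gamma \alpha = \beta \gamma$ is equivalent to
\[
\beta = \gamma \alpha \gamma^{-1},
\]
which is exactly the statement that $\alpha$ and $\beta$ are conjugate in the group $\Aut(\fg)$. Combining the two equivalences yields the desired if-and-only-if statement.

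There is essentially no substantial obstacle here, since the corollary is a mechanical specialization of Proposition~\ref{prop:twist}. The only point worth articulating carefully is that every $\gamma \in \Aut(\fg)$ in the sense of Proposition~\ref{prop:twist} (a Lie bialgebra isomorphism) belongs to the group $\Aut(\fg)$ used in the statement of the corollary, and that the intertwining condition $\gamma\alpha = \beta\gamma$ becomes, upon right-multiplying by $\gamma^{-1}$, the conjugation relation defining conjugacy in $\Aut(\fg)$.
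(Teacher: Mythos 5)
Your proof is correct and follows exactly the paper's route: the paper also obtains this corollary by specializing Proposition~\ref{prop:twist} to $\fh = \fg$ with $\alpha,\beta$ invertible, so that the intertwining condition $\gamma\alpha = \beta\gamma$ becomes conjugacy in $\Aut(\fg)$. Your added care in checking the injectivity hypotheses on $\beta$ and $\beta^{\otimes 2}$ is appropriate but raises no issues.
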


Corollary ~\ref{cor:twist} can be restated as follows.

\begin{corollary}
\label{cor2:twist}
Let $\fg$ be a Lie bialgebra.  Then there is a bijection between (i) the set of isomorphism classes of Hom-Lie bialgebras $\fg_\alpha$ with $\alpha$ invertible and (ii) the set of conjugacy classes in the group $\Aut(\fg)$.
\end{corollary}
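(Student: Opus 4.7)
The plan is to package Corollary \ref{cor:twist} as a bijection statement; no new content beyond what is already proved should be needed. First I would define a map
\[
\Phi \colon \Aut(\fg) \longrightarrow \mathcal{H},
\]
where $\mathcal{H}$ denotes the set of isomorphism classes of Hom-Lie bialgebras of the form $\fg_\alpha$ with $\alpha$ invertible, by setting $\Phi(\alpha) = [\fg_\alpha]$. Theorem \ref{thm:hlbdeform} guarantees that each $\fg_\alpha$ is indeed a Hom-Lie bialgebra, so $\Phi$ is well-defined.

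Next I would verify surjectivity, which is essentially tautological: every class in $\mathcal{H}$ is, by the definition of $\mathcal{H}$, represented by some $\fg_\alpha$ with $\alpha \in \Aut(\fg)$, and is therefore hit by $\Phi(\alpha)$. For the remaining content I would invoke Corollary \ref{cor:twist}, which asserts that $\fg_\alpha \cong \fg_\beta$ as Hom-Lie bialgebras if and only if $\alpha$ and $\beta$ are conjugate in $\Aut(\fg)$. This is exactly the statement that the fibres of $\Phi$ are precisely the conjugacy classes of $\Aut(\fg)$, so $\Phi$ descends to a well-defined injection $\overline{\Phi}$ from the set of conjugacy classes into $\mathcal{H}$. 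Combined with the surjectivity just established, $\overline{\Phi}$ is the desired bijection.

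There is no genuine obstacle here, since the substantive work is absorbed into Corollary \ref{cor:twist}; the rest is the standard passage from an equivalence-preserving surjection to a bijection of quotient sets. The one point I would articulate carefully is that the codomain $\mathcal{H}$ is parametrized by the same invertible endomorphisms used in the domain of $\Phi$, so that the surjectivity claim is genuinely tautological rather than hiding an existence question.
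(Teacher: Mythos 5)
Your proposal is correct and matches the paper's approach: the paper offers no separate proof, simply noting that Corollary~\ref{cor2:twist} is a restatement of Corollary~\ref{cor:twist}, which is exactly the passage from the equivalence relation to the quotient-set bijection that you spell out. Your explicit verification of well-definedness and surjectivity is just a careful unpacking of that same observation.
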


As we will show later in this section, Corollary ~\ref{cor2:twist} implies that there are uncountably many isomorphism classes of Hom-Lie bialgebras of the form $sl(2)_\alpha$.

The next result shows that finite dimensional Hom-Lie bialgebras, like Lie bialgebras, can be dualized.  A proof of this self-dual property for the special case of Lie bialgebras can be found in \cite[Proposition 8.1.2]{majid}.


\begin{theorem}
\label{thm;duality}
Let $(L,[-,-],\Delta,\alpha)$ be a finite dimensional Hom-Lie bialgebra.  Then its linear dual $L^* = \Hom(L,\bk)$ is also a Hom-Lie bialgebra with the dual structure maps:
\begin{equation}
\label{eq:dualmaps}
\alpha(\phi) = \phi \circ \alpha, \quad
\langle [\phi,\psi],x \rangle = \langle \phi \otimes \psi,\Delta(x)\rangle,\quad
\langle \Delta(\phi), x \otimes y\rangle = \langle \phi,[x,y]\rangle
\end{equation}
for $x, y \in L$ and $\phi, \psi \in L^*$.
\end{theorem}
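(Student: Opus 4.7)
The plan is to verify each defining axiom for the Hom-Lie bialgebra structure on $L^*$ by pairing with elements of $L$, $L^{\otimes 2}$, or $L^{\otimes 3}$, and reducing to the corresponding axiom on $L$ via the duality definitions in \eqref{eq:dualmaps}.

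The Hom-Lie algebra axioms for $(L^*,[-,-],\alpha)$ and the Hom-Lie coalgebra axioms for $(L^*,\Delta,\alpha)$ are already handled by the remark following Definition ~\ref{def:hlc}, i.e., by \cite[Propositions 4.10 and 4.11]{ms4}: in finite dimensions the dual of a Hom-Lie algebra is a Hom-Lie coalgebra and vice versa.  In particular, the multiplicativity $\alpha \circ [-,-] = [-,-] \circ \alpha^{\otimes 2}$ on $L^*$ is, under the pairing, a direct transcription of the co-multiplicativity $\Delta \circ \alpha = \alpha^{\otimes 2} \circ \Delta$ on $L$, and likewise in reverse; the Hom-co-Jacobi identity on $L^*$ is dual to the Hom-Jacobi identity on $L$, and the anti-symmetry $\tau \circ \Delta = -\Delta$ on $L^*$ is dual to the anti-symmetry of $[-,-]$ on $L$.

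The substantive remaining task is the compatibility \eqref{eq:compatibility} on $L^*$,
\[
\Delta([\phi,\psi]) = \ad_{\alpha(\phi)}(\Delta(\psi)) - \ad_{\alpha(\psi)}(\Delta(\phi))
\]
for $\phi,\psi \in L^*$.  I would verify this by pairing both sides against an arbitrary $x \otimes y \in L \otimes L$.  On the left, \eqref{eq:dualmaps} yields
\[
\langle \Delta([\phi,\psi]), x \otimes y\rangle = \langle [\phi,\psi],[x,y]\rangle = \langle \phi \otimes \psi, \Delta([x,y])\rangle,
\]
and the compatibility \eqref{eq:compat} on $L$ expands $\Delta([x,y])$ into a sum of four scalar terms in $\phi(-)\psi(-)$.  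On the right, unwinding \eqref{eq:ad} for $n=2$ and repeatedly applying \eqref{eq:dualmaps} rewrites each $L^*$-bracket as an $L$-cobracket via the identity $\langle [\alpha(\phi),\chi],z\rangle = \langle \phi\otimes\chi,(\alpha\otimes Id)\Delta(z)\rangle$; combining this with co-multiplicativity of $\alpha$ on $L$ then shows that $\ad_{\alpha(\phi)}(\Delta(\psi))$ pairs with $x\otimes y$ to give precisely the two terms of \eqref{eq:compat} that have $\alpha(x)$ as the outer adjoint argument, while $\ad_{\alpha(\psi)}(\Delta(\phi))$ pairs to the remaining two terms, with the sign between them matching the minus sign in \eqref{eq:compatibility}.

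The main obstacle is purely combinatorial: matching off each of the four tensor summands in the expansion of $\Delta([x,y])$ from \eqref{eq:compat} with the correct summand of $\ad_{\alpha(\phi)}(\Delta(\psi))$ or $\ad_{\alpha(\psi)}(\Delta(\phi))$ under the pairing, and ensuring that the two Sweedler indices of $\Delta(\psi)$ (respectively $\Delta(\phi)$) on the right land, after the duality, on the correct tensor factors $\alpha(x),y_1,\alpha(y_2),\ldots$ on the left.  No new structural input is needed beyond \eqref{eq:compat} on $L$, the multiplicativity and co-multiplicativity of $\alpha$ on $L$, and the duality definitions \eqref{eq:dualmaps}.
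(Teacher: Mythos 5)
Your strategy coincides with the paper's: invoke \cite[Propositions 4.10 and 4.11]{ms4} for the Hom-Lie algebra and Hom-Lie coalgebra axioms on $L^*$, then verify the compatibility condition \eqref{eq:compatibility} by pairing both sides against $x\otimes y$ and reducing to \eqref{eq:compat} in $L$ via \eqref{eq:dualmaps}. Two details of your sketch will not survive the actual computation, though neither affects the outcome. First, the matching of summands is crossed relative to what you predict: the term $[\alpha(x),y_1]\otimes\alpha(y_2)$ of $\Delta([x,y])$ pairs with $\phi\otimes\psi$ to give $-\langle\alpha(\phi_1)\otimes[\alpha(\psi),\phi_2],x\otimes y\rangle$, a summand of $-\ad_{\alpha(\psi)}(\Delta\phi)$, while $\alpha(y_1)\otimes[\alpha(x),y_2]$ pairs to a summand of $+\ad_{\alpha(\phi)}(\Delta\psi)$; so the two terms of \eqref{eq:compat} with outer argument $\alpha(x)$ do not both land in $\ad_{\alpha(\phi)}(\Delta\psi)$. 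Second, contrary to your closing claim that no further structural input is needed, the anti-symmetry of the bracket and of the cobracket in $L^*$ (equivalently, of $\Delta$ and $[-,-]$ in $L$) is genuinely used to finish the matching --- e.g.\ to rewrite $[\phi_2,\alpha(\psi)]\otimes\alpha(\phi_1)$ as $-[\alpha(\psi),\phi_1]\otimes\alpha(\phi_2)$ so that the result is recognizable as a summand of $\ad_{\alpha(\psi)}(\Delta\phi)$; the paper invokes exactly this in the last step of its computation.
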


\begin{proof}
As we mentioned right after Definition ~\ref{def:hlc}, $(L^*,[-,-],\alpha)$ is a Hom-Lie algebra (which is true even if $L$ is not finite dimensional) and $(L^*,\Delta,\alpha)$ is a Hom-Lie coalgebra (whose validity depends on the finite dimensionality of $L$).  Thus, it remains to check the compatibility condition \eqref{eq:compatibility} between the bracket and the cobracket in $L^*$, i.e.,
\begin{equation}
\label{eq:dualcompat}
\langle \Delta[\phi,\psi], x \otimes y\rangle = \langle \ad_{\alpha(\phi)}(\Delta\psi) - \ad_{\alpha(\psi)}(\Delta\phi), x \otimes y\rangle
\end{equation}
for $x, y \in L$ and $\phi, \psi \in L^*$.

Using the definitions \eqref{eq:dualmaps} and the compatibility condition \eqref{eq:compatibility} (and its expanded form \eqref{eq:compat}) in $L$, we compute the left-hand side of \eqref{eq:dualcompat} as follows:
\begin{align*}
\allowdisplaybreaks
\langle \Delta[\phi,\psi],  x \otimes y\rangle &= \langle [\phi,\psi],[x,y]\rangle = \langle \phi\otimes\psi, \Delta[x,y]\rangle\rangle
= \langle \phi\otimes\psi, \ad_{\alpha(x)}(\Delta(y)) - \ad_{\alpha(y)}(\Delta(x))\rangle\\
&= \langle\phi\otimes\psi,[\alpha(x),y_1]\otimes\alpha(y_2)\rangle +
\langle\phi\otimes\psi,\alpha(y_1)\otimes[\alpha(x),y_2]\rangle\\
&\relphantom{} - \langle\phi\otimes\psi,[\alpha(y),x_1]\otimes\alpha(x_2)\rangle -
\langle\phi\otimes\psi,\alpha(x_1)\otimes[\alpha(y),x_2]\rangle\\
&= \langle \phi_1 \otimes \phi_2 \otimes \psi,\alpha(x)\otimes y_1 \otimes \alpha(y_2)\rangle +
\langle \phi \otimes \psi_1 \otimes \psi_2,\alpha(y_1) \otimes \alpha(x) \otimes y_2\rangle\\
&\relphantom{} - \langle \phi_1 \otimes \phi_2 \otimes \psi,\alpha(y) \otimes x_1 \otimes \alpha(x_2)\rangle -
\langle \phi \otimes \psi_1 \otimes \psi_2,\alpha(x_1) \otimes \alpha(y) \otimes x_2\rangle.\\
&= \langle \alpha(\phi_1) \otimes \phi_2 \otimes \alpha(\psi),x\otimes y_1 \otimes y_2\rangle +
\langle \alpha(\phi) \otimes \alpha(\psi_1) \otimes \psi_2,y_1 \otimes x \otimes y_2\rangle\\
&\relphantom{} - \langle \alpha(\phi_1) \otimes \phi_2 \otimes \alpha(\psi),y \otimes x_1 \otimes x_2\rangle -
\langle \alpha(\phi) \otimes \alpha(\psi_1) \otimes \psi_2,x_1 \otimes y \otimes x_2\rangle.\\
\intertext{Using, in addition, the anti-symmetry of the bracket and the cobracket in $L^*$, the above four terms become:}
&= -\langle \alpha(\phi_1) \otimes [\alpha(\psi),\phi_2],x \otimes y\rangle +
\langle \alpha(\psi_1) \otimes [\alpha(\phi),\psi_2],x \otimes y\rangle\\
&\relphantom{} - \langle [\alpha(\psi),\phi_1] \otimes \alpha(\phi_2),x \otimes y\rangle +
\langle [\alpha(\phi),\psi_1] \otimes \alpha(\psi_2),x \otimes y\rangle.
\end{align*}
This is exactly the right-hand side of \eqref{eq:dualcompat} in expanded form \eqref{eq:compat}, as desired.
\end{proof}


Let us illustrate the results in this section with the Lie bialgebra $sl(2)$.  As we recalled in the previous section, this complex Lie algebra has a basis $\{H,X_{\pm}\}$ \eqref{slbracket}.  It becomes a Lie bialgebra when equipped with the cobracket $\Delta \colon sl(2) \to sl(2)^{\otimes 2}$ \cite{dri83,dri87} (see also \cite[Example 8.1.10]{majid}) defined as
\begin{equation}
\label{slcobracket}
\Delta(H) = 0 \quad\text{and}\quad
\Delta(X_{\pm}) = \frac{1}{2}\left(X_{\pm} \otimes H - H \otimes X_{\pm}\right).
\end{equation}
We will construct all the Hom-Lie bialgebras of the form $sl(2)_\alpha$ using Theorem ~\ref{thm:hlbdeform}.  First we compute the Lie bialgebra morphisms on $sl(2)$.

\begin{proposition}
\label{prop:slbimorphism}
With respect to the basis $\{H,X_{\pm}\}$, a non-zero linear map $\alpha \colon sl(2) \to sl(2)$ is a Lie bialgebra morphism if and only if $\alpha(H) = H$ and $\alpha(X_{\pm}) = b^{\pm 1}X_{\pm}$ for some non-zero complex number $b$.
\end{proposition}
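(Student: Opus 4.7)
The plan is to reduce the problem to the classification of Lie algebra endomorphisms of $sl(2)$ already recalled in Section \ref{sec:rmatrix}. Any Lie bialgebra morphism is in particular a Lie algebra morphism, so the matrix of $\alpha$ in the basis $\{H, X_+, X_-\}$ must take one of the three forms $\alpha_1$, $\alpha_2$, $\alpha_3$ from \eqref{alpha1}--\eqref{alpha3}. Beyond that, $\alpha$ must commute with the cobracket \eqref{slcobracket}, i.e., $(\alpha \otimes \alpha) \circ \Delta = \Delta \circ \alpha$. The ``if'' direction is an immediate verification on basis elements, so the work lies entirely in the ``only if'' direction.

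The decisive observation is that $\Delta(H) = 0$ forces $\Delta(\alpha(H)) = 0$. Writing $\alpha(H) = pH + qX_+ + rX_-$, this condition becomes $q\,\Delta(X_+) + r\,\Delta(X_-) = 0$. Since the four pure tensors $X_\pm \otimes H$ and $H \otimes X_\pm$ appearing in $\Delta(X_\pm)$ are linearly independent in $sl(2)^{\otimes 2}$, we conclude $q = r = 0$, i.e., $\alpha(H) \in \bk \cdot H$. Reading the first column of each matrix form then narrows the classification sharply: type $\alpha_1$ requires $ab = b^{-1}c = 0$, so $a = c = 0$ (since $b \neq 0$); type $\alpha_2$ requires $b^{-1}c = ab = 0$, so again $a = c = 0$; and type $\alpha_3$ requires $b = 0$, contradicting $ab \neq 0$, so type $\alpha_3$ is eliminated outright.

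It then remains to test the two surviving degenerate cases on $X_\pm$. Type $\alpha_1$ with $a = c = 0$ becomes $\alpha(H) = H$, $\alpha(X_\pm) = b^{\pm 1}X_\pm$, and a one-line check yields $\Delta(\alpha(X_\pm)) = \tfrac{b^{\pm 1}}{2}(X_\pm \otimes H - H \otimes X_\pm) = (\alpha \otimes \alpha)(\Delta(X_\pm))$, giving exactly the family in the statement. Type $\alpha_2$ with $a = c = 0$ becomes $\alpha(H) = -H$, $\alpha(X_+) = bX_-$, $\alpha(X_-) = b^{-1}X_+$; then $\Delta(\alpha(X_+)) = \tfrac{b}{2}(X_- \otimes H - H \otimes X_-)$ whereas $(\alpha \otimes \alpha)(\Delta(X_+)) = -\tfrac{b}{2}(X_- \otimes H - H \otimes X_-)$, so this case is excluded by the sign mismatch. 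The only potential obstacle is careful bookkeeping through the three matrix forms; once one observes that $\Delta(\alpha(H)) = 0$ already pins $\alpha(H)$ into $\bk \cdot H$, the remaining cases collapse quickly.
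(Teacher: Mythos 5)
Your proposal is correct and follows essentially the same route as the paper: reduce to the classification of Lie algebra endomorphisms $\alpha_1,\alpha_2,\alpha_3$, use $\Delta(\alpha(H))=0$ to force $a=c=0$ in types $\alpha_1$ and $\alpha_2$ and to eliminate $\alpha_3$ via $b=0$, and then rule out the surviving $\alpha_2$ case by the sign mismatch in the compatibility check on $X_+$. Your only (minor) refinement is to state the observation $\alpha(H)\in\bC\cdot H$ once as a general principle and read it off the first columns, rather than recomputing $\Delta(\alpha(H))$ case by case as the paper does.
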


\begin{proof}
In the previous section, we recalled the classification of Lie algebra morphisms on $sl(2)$ \cite{yau6}.  The non-zero Lie algebra morphisms on $sl(2)$ must have one of the three forms: $\alpha_1$ \eqref{alpha1}, $\alpha_2$ \eqref{alpha2}, or $\alpha_3$ \eqref{alpha3}.  In the context of this classification, the Proposition is equivalent to saying that the Lie bialgebra morphisms on $sl(2)$ are exactly the $\alpha_1$ with $a = c = 0$.  It is immediate that $\alpha_1$ with $a = c = 0$ commutes with the cobracket $\Delta$ \eqref{slcobracket} and is, therefore, a Lie bialgebra morphism.  It remains to check that they are the only non-zero Lie bialgebra morphisms on $sl(2)$.

If $\alpha = \alpha_1$ \eqref{alpha1} is a Lie bialgebra morphism on $sl(2)$, then we have
\[
\begin{split}
0 &= \alpha^{\otimes 2}(\Delta(H)) = \Delta(\alpha(H)) = \Delta(H - 2abX_+ - 2b^{-1}cX_-)\\
&= -ab|X_+ \otimes H| - b^{-1}c|X_- \otimes H|,
\end{split}
\]
where the abbreviation \eqref{absolute} is used.  The above element in $sl(2)^{\otimes 2}$ is $0$ if and only if $a = c = 0$.  Next we show that maps of the forms $\alpha_2$ and $\alpha_3$ are not Lie bialgebra morphisms on $sl(2)$.

If $\alpha = \alpha_2$ \eqref{alpha2} is a Lie bialgebra morphism on $sl(2)$, then a similar computation as in the previous paragraph implies $a = c = 0$.  In other words, we must have $\alpha(H) = -H$, $\alpha(X_+) = bX_-$, and $\alpha(X_-) = b^{-1}X_+$.  In this case, on the one hand, we have
\begin{equation}
\label{alphadelta1}
\alpha^{\otimes 2}(\Delta(X_+)) = \frac{1}{2}|\alpha(X_+) \otimes \alpha(H)| = -\frac{b}{2}|X_-\otimes H|.
\end{equation}
On the other hand, we have
\begin{equation}
\label{alphadelta2}
\Delta(\alpha(X_+)) = b\Delta(X_-) = \frac{b}{2}|X_-\otimes H|.
\end{equation}
The equality between \eqref{alphadelta1} and \eqref{alphadelta2} then implies $b = 0$, which is a contradiction.  Therefore, maps of the form $\alpha_2$ are not Lie bialgebra morphisms on $sl(2)$.

Finally, suppose that $\alpha = \alpha_3$ \eqref{alpha3} is a Lie bialgebra morphism on $sl(2)$.  Then a similar computation as above, applied to $\alpha^{\otimes 2}(\Delta(H)) = 0 = \Delta(\alpha(H))$, implies $b = 0$.  This is again a contradiction.  Therefore, maps of the form $\alpha_3$ are not Lie bialgebra morphisms on $sl(2)$.
\end{proof}

Combining Theorem ~\ref{thm:hlbdeform}, Proposition ~\ref{prop:slbimorphism}, and the definitions \eqref{slbracket} and \eqref{slcobracket} of the (co)bracket in $sl(2)$, we obtain the following family of Hom-Lie bialgebras.

\begin{corollary}
\label{cor1:HLsl}
Suppose that $\alpha \colon sl(2) \to sl(2)$ is the Lie bialgebra morphism given by $\alpha(H) = H$ and $\alpha(X_{\pm}) = b^{\pm 1}X_{\pm}$ for some non-zero complex number $b$.  Then there is a Hom-Lie bialgebra $sl(2)_\alpha = (sl(2),[-,-]_\alpha,\Delta_\alpha,\alpha)$, in which the bracket and the cobracket are determined by
\begin{equation}
\label{slstructure}
\begin{split}
[H,X_{\pm}]_\alpha &= \pm 2b^{\pm 1}X_{\pm}, \quad [X_+,X_-]_\alpha = H,\\
\Delta_\alpha(H) &= 0,\quad \Delta_\alpha(X_{\pm}) = \frac{b^{\pm 1}}{2}(X_{\pm} \otimes H - H \otimes X_{\pm}).
\end{split}
\end{equation}
\end{corollary}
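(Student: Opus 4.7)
The statement is essentially a direct corollary, so my plan is to simply combine the two preceding results and then verify that the explicit formulas for $[-,-]_\alpha$ and $\Delta_\alpha$ on the basis $\{H, X_+, X_-\}$ match \eqref{slstructure}. No genuinely new argument is needed.

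First, I would invoke Proposition \ref{prop:slbimorphism} to guarantee that the linear map $\alpha$ with $\alpha(H) = H$ and $\alpha(X_\pm) = b^{\pm 1} X_\pm$ (where $b \neq 0$) is a Lie bialgebra morphism on $sl(2)$. Then Theorem \ref{thm:hlbdeform} applies and immediately produces a Hom-Lie bialgebra $sl(2)_\alpha = (sl(2), [-,-]_\alpha, \Delta_\alpha, \alpha)$ with $[-,-]_\alpha = \alpha \circ [-,-]$ and $\Delta_\alpha = \Delta \circ \alpha$. At this point the only remaining task is bookkeeping on basis elements.

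For the twisted bracket, using \eqref{slbracket} and the definition of $\alpha$, I would compute
\[
[H, X_\pm]_\alpha = \alpha([H, X_\pm]) = \alpha(\pm 2 X_\pm) = \pm 2 b^{\pm 1} X_\pm,
\qquad
[X_+, X_-]_\alpha = \alpha([X_+, X_-]) = \alpha(H) = H.
\]
For the twisted cobracket, using \eqref{slcobracket}, I would compute
\[
\Delta_\alpha(H) = \Delta(\alpha(H)) = \Delta(H) = 0,
\qquad
\Delta_\alpha(X_\pm) = \Delta(b^{\pm 1} X_\pm) = \tfrac{b^{\pm 1}}{2}\bigl(X_\pm \otimes H - H \otimes X_\pm\bigr).
\]
These match exactly the formulas in \eqref{slstructure}, which completes the proof.

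There is essentially no obstacle here: the Hom-Lie bialgebra axioms are guaranteed abstractly by Theorem \ref{thm:hlbdeform}, and the explicit formulas are a one-line evaluation of $\alpha$ on the structure maps of $sl(2)$. The only thing worth double-checking is that the factor of $b^{\pm 1}$ in $\Delta_\alpha(X_\pm)$ is correctly traced through the definition $\Delta_\alpha = \Delta \circ \alpha$ (rather than $\alpha^{\otimes 2} \circ \Delta$, which by the morphism property would give the same answer anyway).
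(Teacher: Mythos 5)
Your proposal is correct and follows exactly the route the paper takes: the paper itself derives this corollary by combining Theorem \ref{thm:hlbdeform} with Proposition \ref{prop:slbimorphism} and then evaluating $[-,-]_\alpha = \alpha \circ [-,-]$ and $\Delta_\alpha = \Delta \circ \alpha$ on the basis $\{H, X_\pm\}$ via \eqref{slbracket} and \eqref{slcobracket}. All of your basis computations match \eqref{slstructure}.
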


As we will see in Corollary ~\ref{cor:sl2quasi} in the next section, the Hom-Lie bialgebras $sl(2)_\alpha$ have the additional property of being \emph{quasi-triangular} (Definition ~\ref{def:coboundary}).  This means that the classical $r$-matrix $r \in sl(2)^{\otimes 2}$ \eqref{eq:slr} is fixed by $\alpha^{\otimes 2}$, that it induces the cobracket $\Delta_\alpha$ via the adjoint map $\ad(r)$ \eqref{eq:ad}, and that $r$ is a solution of the CHYBE.

Proposition ~\ref{prop:slbimorphism} also tells us that the group $\Aut(sl(2))$ of Lie bialgebra isomorphisms on $sl(2)$ is isomorphic to $\bC^*$, the multiplicative group of non-zero complex numbers.  In particular, it is an abelian group, and so two elements in it are conjugate if and only if they are equal.  Combining Corollary ~\ref{cor:twist} and Corollary ~\ref{cor1:HLsl}, we have the following result, which implies that there are uncountably many non-isomorphic Hom-Lie bialgebras of the form $sl(2)_\alpha$.

\begin{corollary}
\label{cor2:HLsl}
Two Hom-Lie bialgebras of the form $sl(2)_\alpha$ (as in Corollary ~\ref{cor1:HLsl}) are isomorphic if and only if the associated scalars $b$ are equal.  In particular, there is a bijection between the set of isomorphism classes of Hom-Lie bialgebras of the form $sl(2)_\alpha$ and the set of non-zero complex numbers.
\end{corollary}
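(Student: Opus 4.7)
The plan is to deduce this corollary directly from Corollary~\ref{cor:twist} (equivalently Corollary~\ref{cor2:twist}), once the structure of the group $\Aut(sl(2))$ of Lie bialgebra automorphisms is understood. By Proposition~\ref{prop:slbimorphism}, every non-zero Lie bialgebra morphism on $sl(2)$ has the form $\alpha_b \colon H \mapsto H$, $X_\pm \mapsto b^{\pm 1} X_\pm$ for some non-zero $b \in \bC$. Each such $\alpha_b$ is automatically invertible (its inverse is $\alpha_{b^{-1}}$), so these maps are precisely the elements of $\Aut(sl(2))$.

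The first step is to verify that the assignment $b \mapsto \alpha_b$ gives a group isomorphism $\bC^* \xrightarrow{\cong} \Aut(sl(2))$. A direct computation on the basis $\{H, X_+, X_-\}$ shows $\alpha_b \circ \alpha_c = \alpha_{bc}$, and the assignment is clearly injective and, by Proposition~\ref{prop:slbimorphism}, surjective. Since $\bC^*$ is abelian, $\Aut(sl(2))$ is abelian, so two elements $\alpha_b$ and $\alpha_c$ are conjugate in $\Aut(sl(2))$ if and only if they are equal, i.e., if and only if $b = c$.

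The second step is to invoke Corollary~\ref{cor:twist}: the Hom-Lie bialgebras $sl(2)_{\alpha_b}$ and $sl(2)_{\alpha_c}$ are isomorphic precisely when $\alpha_b$ and $\alpha_c$ are conjugate in $\Aut(sl(2))$. Combined with the previous step, this gives the stated equivalence $sl(2)_{\alpha_b} \cong sl(2)_{\alpha_c} \iff b = c$. The final statement about the bijection is then immediate from Corollary~\ref{cor2:twist}: the set of isomorphism classes is in bijection with the conjugacy classes in $\Aut(sl(2)) \cong \bC^*$, which, by abelianness, coincides with $\bC^*$ itself.

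There is no real obstacle here; the work has all been done in Proposition~\ref{prop:slbimorphism} and Corollary~\ref{cor:twist}. The only minor thing to be careful about is confirming that every $\alpha_b$ from Proposition~\ref{prop:slbimorphism} is genuinely invertible (so that we are in the setting of Corollary~\ref{cor:twist} where both twisting maps are automorphisms), and this is clear since $b \neq 0$.
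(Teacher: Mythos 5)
Your proposal is correct and follows essentially the same route as the paper: the paper likewise observes (from Proposition~\ref{prop:slbimorphism}) that $\Aut(sl(2)) \cong \bC^*$ is abelian, so conjugacy classes are singletons, and then applies Corollary~\ref{cor:twist} together with Corollary~\ref{cor1:HLsl}. Your extra check that each $\alpha_b$ is invertible with inverse $\alpha_{b^{-1}}$ is a reasonable (if minor) point of care that the paper leaves implicit.
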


Next we describe the dual $sl(2)_\alpha^* = \Hom(sl(2)_\alpha,\bC)$ of the three-dimensional Hom-Lie bialgebra $sl(2)_\alpha$ in Corollary ~\ref{cor1:HLsl}.  Let $\{\phi,\psi_+,\psi_-\}$ be the dual basis of $sl(2)_\alpha^*$.  In other words, these basis elements are determined by
\[
\langle \phi,H\rangle = 1 = \langle \psi_{\pm},X_{\pm}\rangle,
\]
where $\{H,X_{\pm}\}$ is the standard basis of $sl(2)$ \eqref{slbracket}.  The following result is a generalization of \cite[Example 8.1.11]{majid}, which describes the dual Lie bialgebra $sl(2)^*$.

\begin{corollary}
\label{cor:dual}
Let $sl(2)_\alpha$ be the Hom-Lie bialgebra in Corollary ~\ref{cor1:HLsl}.  Then the structure maps of its dual Hom-Lie bialgebra $sl(2)_\alpha^*$, in the sense of Theorem ~\ref{thm;duality}, are determined by:
\begin{equation}
\label{sldualstructure}
\begin{split}
\alpha(\phi) &= \phi \circ \alpha, \quad \alpha(\psi_{\pm}) = \psi_{\pm} \circ \alpha,\quad
[\psi_{\pm},\phi]_\alpha = \frac{b^{\pm 1}}{2}\psi_{\pm}, \quad [\psi_+,\psi_-]_\alpha = 0,\\
\Delta_\alpha(\psi_{\pm}) &= \pm 2b^{\pm 1}(\phi \otimes \psi_{\pm} - \psi_{\pm} \otimes \phi),\quad
\Delta_\alpha(\phi) = \psi_+ \otimes \psi_- - \psi_- \otimes \psi_+.
\end{split}
\end{equation}
\end{corollary}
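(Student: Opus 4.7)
The plan is to apply Theorem \ref{thm;duality} directly, evaluating each of the three dual structure maps on the basis $\{\phi,\psi_+,\psi_-\}$ of $sl(2)_\alpha^*$ by pairing with the basis $\{H,X_+,X_-\}$ of $sl(2)_\alpha$. Since everything in \eqref{slstructure} is already written in closed form, this reduces to three batches of routine finite-dimensional linear algebra, so the proof is essentially bookkeeping.

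First I would handle the twisting map. By \eqref{eq:dualmaps}, $\alpha$ on $sl(2)_\alpha^*$ is defined by precomposition with $\alpha$ on $sl(2)$, so the formulas $\alpha(\phi)=\phi\circ\alpha$ and $\alpha(\psi_{\pm})=\psi_{\pm}\circ\alpha$ are immediate from the definition; evaluating on basis vectors gives $\alpha(\phi)=\phi$ and $\alpha(\psi_{\pm})=b^{\pm1}\psi_{\pm}$ if one wants them in terms of the dual basis, but the statement in \eqref{sldualstructure} only asks for the ``$\circ\alpha$'' description.

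Next I would compute the dual bracket from $\langle[f,g]_\alpha,x\rangle=\langle f\otimes g,\Delta_\alpha(x)\rangle$. Since $\Delta_\alpha(H)=0$ and $\Delta_\alpha(X_{\pm})=\tfrac{b^{\pm1}}{2}(X_{\pm}\otimes H-H\otimes X_{\pm})$, pairing $\psi_{\pm}\otimes\phi$ against $\Delta_\alpha(X_{\pm})$ picks out exactly the coefficient $b^{\pm1}/2$ and all other pairings vanish; this gives $[\psi_{\pm},\phi]_\alpha=\tfrac{b^{\pm1}}{2}\psi_{\pm}$. Pairing $\psi_+\otimes\psi_-$ against each $\Delta_\alpha(x)$ yields $0$ because every tensor summand in a nonzero $\Delta_\alpha(x)$ contains an $H$-factor, producing $[\psi_+,\psi_-]_\alpha=0$.

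Finally I would compute the dual cobracket from $\langle\Delta_\alpha(f),x\otimes y\rangle=\langle f,[x,y]_\alpha\rangle$. Using $[H,X_{\pm}]_\alpha=\pm2b^{\pm1}X_{\pm}$ and $[X_+,X_-]_\alpha=H$, evaluating $\psi_{\pm}$ on these brackets picks out the coefficients $\pm2b^{\pm1}$ on the pairs $(H,X_{\pm})$ and $(X_{\pm},H)$, whence $\Delta_\alpha(\psi_{\pm})=\pm2b^{\pm1}(\phi\otimes\psi_{\pm}-\psi_{\pm}\otimes\phi)$; evaluating $\phi$ on all brackets picks out only $[X_+,X_-]_\alpha=H$ (and its skew partner), giving $\Delta_\alpha(\phi)=\psi_+\otimes\psi_--\psi_-\otimes\psi_+$. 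The only thing to watch for is consistent sign tracking and remembering to include both orderings $(x,y)$ and $(y,x)$ via the anti-symmetry of the bracket; there is no genuine obstacle, since Theorem \ref{thm;duality} has already verified that the resulting structure satisfies all the Hom-Lie bialgebra axioms.
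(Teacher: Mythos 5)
Your proposal is correct and follows essentially the same route as the paper: both verify that the stated formulas are dual, in the sense of \eqref{eq:dualmaps}, to the bracket and cobracket \eqref{slstructure} of $sl(2)_\alpha$ by pairing against basis elements, with Theorem~\ref{thm;duality} guaranteeing the Hom-Lie bialgebra axioms. Your batch-by-batch computation matches the paper's list of non-trivial pairings, so there is nothing to add.
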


\begin{proof}
One can check directly that \eqref{sldualstructure} defines a Hom-Lie bialgebra structure on $sl(2)_\alpha^*$.  It remains to check that the bracket and the cobracket in $sl(2)_\alpha^*$ are dual to, respectively, the cobracket and the bracket \eqref{slstructure} in $sl(2)_\alpha$, in the sense of \eqref{eq:dualmaps}.  Using the anti-symmetry of these (co)brackets, we only need to check these duality properties for the determining brackets on the basis elements.  Most of these equalities hold because both sides are zero.  The remaining non-trivial ones are:
\[
\begin{split}
\langle [\psi_{\pm},\phi]_\alpha, X_{\pm}\rangle &= \frac{b^{\pm 1}}{2} = \langle \psi_{\pm} \otimes \phi, \Delta_\alpha(X_{\pm})\rangle,\qquad
\langle \Delta_\alpha(\phi),X_+\otimes X_-\rangle = 1 = \langle \phi,[X_+,X_-]_\alpha\rangle,\\
\langle \Delta_\alpha(\psi_{\pm}),H\otimes X_{\pm}\rangle &= \pm 2b^{\pm 1} = \langle \psi_{\pm},[H,X_{\pm}]_\alpha\rangle.
\end{split}
\]
\end{proof}

\section{Coboundary and quasi-triangular Hom-Lie bialgebras}
\label{sec;cob}

The connections between the CHYBE \eqref{eq:chybe} and Hom-Lie bialgebras (Definition ~\ref{def:hlb}) arise in the sub-classes of coboundary and quasi-triangular Hom-Lie bialgebras.  We first prove the analogue of Theorem ~\ref{thm:hlbdeform} for coboundary/quasi-triangular Hom-Lie bialgebras  (Theorem ~\ref{thm:cob}), which  gives an efficient method for constructing these objects from coboundary/quasi-triangular Lie bialgebras.  As an example, we observe that the Hom-Lie bialgebras $sl(2)_\alpha$ in Corollary \ref{cor1:HLsl} are all quasi-triangular (Corollary ~\ref{cor:sl2quasi}).  Then we show how a coboundary/quasi-triangular Hom-Lie bialgebra can be constructed from a Hom-Lie algebra and a suitable element $r \in L^{\otimes 2}$ (Theorem ~\ref{thm:cob-char} and Corollary ~\ref{cor:cob}).  This section ends with several characterizations of when a coboundary Hom-Lie bialgebra is a quasi-triangular Hom-Lie bialgebra (Theorem ~\ref{thm:quasi}).

Recall the adjoint map in \eqref{eq:ad}.  Here are the relevant definitions.


\begin{definition}
\label{def:coboundary}
A \textbf{coboundary Hom-Lie bialgebra} $(L,[-,-],\Delta,\alpha,r)$ consists of a Hom-Lie bialgebra $(L,[-,-],\Delta,\alpha)$ and an element $r = \sum r_1 \otimes r_2 \in L^{\otimes 2}$ such that $\alpha^{\otimes 2}(r) = r$ and
\begin{equation}
\label{eq:Delta=ad}
\Delta(x) = \ad_x(r) = \sum [x,r_1] \otimes \alpha(r_2) + \alpha(r_1) \otimes [x,r_2]
\end{equation}
for all $x \in L$.  A \textbf{quasi-triangular Hom-Lie bialgebra} is a coboundary Hom-Lie bialgebra in which $r$ is a solution of the CHYBE \eqref{eq:chybe}.  In these cases, we also write $\Delta$ as $\ad(r)$.
\end{definition}

A coboundary/quasi-triangular Hom-Lie bialgebra in which $\alpha = Id$ is exactly a coboundary/quasi-triangular Lie bialgebra, as defined by Drinfel'd \cite{dri83,dri87}.  One can also use this as the definition of a coboundary/quasi-triangular Lie bialgebra, which we denote by $(L,[-,-],\Delta,r)$.  To be more precise, a \emph{coboundary Lie bialgebra} is a Lie bialgebra in which the Lie cobracket $\Delta$ takes the form \eqref{eq:Delta=ad} with $\alpha = Id$.  A \emph{quasi-triangular Lie bialgebra} is a coboundary Lie bialgebra in which $r$ is a solution of the CYBE, which is the CHYBE \eqref{eq:chybe} with $\alpha = Id$.  Note that we do not require $r$ to be anti-symmetric in a coboundary Hom-Lie bialgebra, whereas in \cite{dri87} $r$ is assumed to be anti-symmetric in a coboundary Lie bialgebra.  Our convention follows that of \cite{majid}.

\begin{remark}
\label{rk:delta}
Let us explain why \eqref{eq:Delta=ad} is a natural condition.  Recall from Remark \ref{rk:cocycle} that the compatibility condition \eqref{eq:compatibility} in a Hom-Lie bialgebra $L$ says that the cobracket $\Delta$ is a $1$-cocycle in $C^1(L,L^{\otimes 2})$, where $L$ acts on $L^{\otimes 2}$ via the $\alpha$-twisted adjoint action \eqref{eq:L2}.  The simplest $1$-cocycles are the $1$-coboundaries, i.e., images of $\delta^0_{HL}$.  We can define the Hom-Lie $0$-cochains and $0$th differential as follows, extending the definitions in \cite[section 5]{ms3}.  Set $C^0(L,L^{\otimes 2})$ as the subspace of $L^{\otimes 2}$ consisting of elements that are fixed by $\alpha^{\otimes 2}$.  Then we define the differential $\delta^0_{HL} \colon C^0(L,L^{\otimes 2}) \to C^1(L,L^{\otimes 2})$ by setting $\delta^0_{HL}(r) = \ad(r)$, as in \eqref{eq:ad}.  It is not hard to check that, for $r \in C^0(L,L^{\otimes 2})$, we have $\delta^1_{HL}(\delta^0_{HL}(r)) = 0$, where $\delta^1_{HL}$ is defined in \eqref{eq:1cobound}.  In fact, what this condition says is that
\begin{equation}
\label{compose}
\begin{split}
0 &= \delta^1_{HL}(\delta^0_{HL}(r))(x,y) = \delta^1_{HL}(\ad(r))(x,y)\\
&= \ad_{[x,y]}(r) - \ad_{\alpha(x)}(\ad_y(r)) + \ad_{\alpha(y)}(\ad_x(r))
\end{split}
\end{equation}
for all $x,y \in L$.  We will prove \eqref{compose} in Lemma ~\ref{lem:perturb} below.  So such a $\delta^0_{HL}(r) = \ad(r)$ is a $1$-coboundary, and hence a $1$-cocycle.  This fact makes $\ad(r)$ (with $\alpha^{\otimes 2}(r) = r$) a natural candidate for the cobracket in a Hom-Lie bialgebra and also justifies the name \emph{coboundary Hom-Lie bialgebra}.
\end{remark}

The following result is the analogue of Theorem ~\ref{thm:hlbdeform} for coboundary/quasi-triangular Hom-Lie bialgebras.  It says that these objects can be obtained by twisting coboundary/quasi-triangular Lie bialgebras via a suitable morphism.

\begin{theorem}
\label{thm:cob}
Let $(L,[-,-],\Delta,r)$ be a coboundary Lie bialgebra and $\alpha \colon L \to L$ be a Lie algebra morphism such that $\alpha^{\otimes 2}(r) = r$.  Then the following two statements hold.
\begin{enumerate}
\item
$L_\alpha = (L,[-,-]_\alpha, \Delta_\alpha,\alpha,r)$ is a coboundary Hom-Lie bialgebra, where $[-,-]_\alpha = \alpha \circ [-,-]$ and $\Delta_\alpha = \Delta \circ \alpha$.
\item
If, in addition, $L$ is a quasi-triangular Lie bialgebra, then $L_\alpha$ is a quasi-triangular Hom-Lie bialgebra.
\end{enumerate}
\end{theorem}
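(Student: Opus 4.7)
My strategy is to reduce part~(1) to Theorem~\ref{thm:hlbdeform} plus a short direct check of the coboundary identity, and to reduce part~(2) to Theorem~\ref{thm:chybe} specialized at $n=0$.

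For part~(1), I would first promote the hypothesis ``$\alpha$ is a Lie algebra morphism with $\alpha^{\otimes 2}(r)=r$'' to ``$\alpha$ is a Lie bialgebra morphism'' on $(L,[-,-],\Delta)$. Since $\Delta(x) = [x,r_1]\otimes r_2 + r_1\otimes[x,r_2]$ by the coboundary hypothesis, a direct computation expands $\alpha^{\otimes 2}(\Delta(x))$ to $[\alpha(x),\alpha(r_1)]\otimes\alpha(r_2) + \alpha(r_1)\otimes[\alpha(x),\alpha(r_2)]$ using multiplicativity of $\alpha$; the linearity of the untwisted adjoint in its tensor argument together with $\alpha^{\otimes 2}(r)=r$ then collapses this back to $\Delta(\alpha(x))$. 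Theorem~\ref{thm:hlbdeform} now certifies that $L_\alpha = (L,[-,-]_\alpha,\Delta_\alpha,\alpha)$ is a Hom-Lie bialgebra. Next I would verify the coboundary identity $\Delta_\alpha(x) = \ad_x(r)$, where $\ad_x$ is the $\alpha$-twisted adjoint of~\eqref{eq:ad} in the Hom-Lie algebra $L_\alpha$. Expanding directly,
\begin{equation*}
\ad_x(r) = [x,r_1]_\alpha\otimes\alpha(r_2) + \alpha(r_1)\otimes[x,r_2]_\alpha = \alpha[x,r_1]\otimes\alpha(r_2) + \alpha(r_1)\otimes\alpha[x,r_2],
\end{equation*}
which by multiplicativity of $\alpha$ and $\alpha^{\otimes 2}(r)=r$ simplifies to $[\alpha(x),r_1]\otimes r_2 + r_1\otimes[\alpha(x),r_2] = \Delta(\alpha(x)) = \Delta_\alpha(x)$. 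Combined with the given fixed-point condition $\alpha^{\otimes 2}(r)=r$, this is exactly the coboundary axiom of Definition~\ref{def:coboundary}.

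For part~(2), the extra hypothesis is $[[r,r]]^{Id} = 0$ in $L$. Theorem~\ref{thm:chybe} asserts that $(\alpha^{\otimes 2})^n(r)$ solves the CHYBE in $L_\alpha$ for every $n \geq 0$; specializing to $n=0$ and using $\alpha^{\otimes 2}(r)=r$ yields $[[r,r]]^\alpha = 0$ in $L_\alpha$, so $r$ itself is a solution of the CHYBE there. Combined with part~(1), this upgrades $L_\alpha$ to a quasi-triangular Hom-Lie bialgebra.

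The main obstacle I foresee is purely notational rather than conceptual: one must carefully distinguish the untwisted adjoint action on $L^{\otimes 2}$ (appearing in the coboundary condition on the original Lie bialgebra $L$) from the $\alpha$-twisted adjoint of~\eqref{eq:ad} inside $L_\alpha$, and invoke $\alpha^{\otimes 2}(r)=r$ together with the multiplicativity of $\alpha$ at precisely the right moments so that the two coincide. Beyond this bookkeeping, no new identities need to be established, since Theorems~\ref{thm:hlbdeform} and~\ref{thm:chybe} carry all the substantive content.
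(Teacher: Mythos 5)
Your proposal is correct and follows essentially the same route as the paper's proof: both parts reduce to Theorem~\ref{thm:hlbdeform} after verifying that $\alpha$ is a Lie bialgebra morphism (using $\alpha^{\otimes 2}(r)=r$ and multiplicativity), followed by the same direct expansion showing $\ad_x(r)=\Delta_\alpha(x)$, and part~(2) is obtained exactly as in the paper by specializing Theorem~\ref{thm:chybe} to $n=0$.
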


\begin{proof}
For the first statement, we need to show two things: (i) $L_\alpha = (L,[-,-]_\alpha,\Delta_\alpha,\alpha)$ is a Hom-Lie bialgebra, and (ii) the condition \eqref{eq:Delta=ad} holds for $\Delta_\alpha$ and $[-,-]_\alpha$.  For (i), we use Theorem ~\ref{thm:hlbdeform}.  Since $L$ is a Lie bialgebra and $\alpha$ is a Lie algebra morphism, it remains to show that $\alpha$ is compatible with the Lie cobracket $\Delta = \ad(r)$ as well.  Write $r = \sum r_1 \otimes r_2$.  Using $\alpha[-,-] = [-,-]\circ \alpha^{\otimes 2}$ and $\alpha^{\otimes 2}(r) = r$, we compute as follows:
\[
\begin{split}
\alpha^{\otimes 2}(\Delta(x)) &=
\alpha^{\otimes 2}\left([x,r_1] \otimes r_2 + r_1 \otimes [x,r_2]\right)\\
&= [\alpha(x),\alpha(r_1)] \otimes \alpha(r_2) + \alpha(r_1) \otimes [\alpha(x),\alpha(r_2)]\\
&= [\alpha(x),r_1] \otimes r_2 + r_1 \otimes [\alpha(x),r_2] = \Delta(\alpha(x)).
\end{split}
\]
Therefore, $\alpha$ is a Lie bialgebra morphism on $L$.  It follows from Theorem ~\ref{thm:hlbdeform} that $L_\alpha$ is a Hom-Lie bialgebra, thereby proving (i).

For (ii), we compute similarly as follows:
\[
\begin{split}
\Delta_\alpha(x) &= \Delta(\alpha(x)) =  [\alpha(x),\alpha(r_1)] \otimes \alpha(r_2) + \alpha(r_1) \otimes [\alpha(x),\alpha(r_2)]\\
&= [x,r_1]_\alpha \otimes \alpha(r_2) + \alpha(r_1) \otimes [x,r_2]_\alpha.
\end{split}
\]
This shows that $\Delta_\alpha$ and $[-,-]_\alpha$ satisfy \eqref{eq:Delta=ad}.  We have shown that $L_\alpha$ is a coboundary Hom-Lie bialgebra.

For statement (2), we assume, in addition, that $r$ is a solution of the CYBE (i.e., \eqref{eq:chybe} with $\alpha = Id$).  By Theorem ~\ref{thm:chybe} (the case $n = 0$), we know that $r$ is also a solution of the CHYBE in $L_\alpha$.  Together with part (1), we conclude that $L_\alpha$ is a quasi-triangular Hom-Lie bialgebra.
\end{proof}

The following result is an illustration of Theorem ~\ref{thm:cob}.

\begin{corollary}
\label{cor:sl2quasi}
The Hom-Lie bialgebras $sl(2)_\alpha$ in Corollary ~\ref{cor1:HLsl} are all quasi-triangular with $r = X_+ \otimes X_- + \frac{1}{4}H \otimes H$ \eqref{eq:slr}.
\end{corollary}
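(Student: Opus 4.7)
The plan is to deduce this corollary directly from Theorem \ref{thm:cob}(2). That theorem says: if $(L,[-,-],\Delta,r)$ is a quasi-triangular Lie bialgebra and $\alpha \colon L \to L$ is a Lie algebra morphism with $\alpha^{\otimes 2}(r) = r$, then $L_\alpha$ is a quasi-triangular Hom-Lie bialgebra with the same $r$. So the task reduces to verifying two things for $sl(2)$ with the standard $r$-matrix $r = X_+ \otimes X_- + \tfrac{1}{4}H \otimes H$ of \eqref{eq:slr} and the morphisms $\alpha$ described in Corollary \ref{cor1:HLsl}.

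First, I would recall that $(sl(2),[-,-],\Delta,r)$ is a quasi-triangular Lie bialgebra in the classical sense of Drinfel'd. This has two pieces: the element $r$ is a classical $r$-matrix (a standard fact, cited in the paragraph containing \eqref{eq:slr}), and the cobracket \eqref{slcobracket} is $\ad(r)$ in the Lie-algebra sense (i.e., the $\alpha = Id$ case of \eqref{eq:Delta=ad}). The latter is a short direct check using \eqref{slbracket}: e.g., $\ad_H(r) = [H,X_+]\otimes X_- + X_+ \otimes [H,X_-] = 2 X_+ \otimes X_- - 2 X_+ \otimes X_- = 0 = \Delta(H)$, and analogous computations for $X_\pm$ recover $\tfrac{1}{2}(X_\pm \otimes H - H \otimes X_\pm)$.

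Second, I would verify $\alpha^{\otimes 2}(r) = r$ for every $\alpha$ in Corollary \ref{cor1:HLsl}. Since $\alpha(H) = H$ and $\alpha(X_\pm) = b^{\pm 1} X_\pm$, we have $\alpha^{\otimes 2}(X_+\otimes X_-) = b \cdot b^{-1}\, X_+\otimes X_- = X_+\otimes X_-$ and $\alpha^{\otimes 2}(H\otimes H) = H\otimes H$, giving $\alpha^{\otimes 2}(r) = r$. Because $\alpha$ is already a Lie bialgebra morphism by Proposition \ref{prop:slbimorphism}, it is in particular a Lie algebra morphism, so all hypotheses of Theorem \ref{thm:cob}(2) are met.

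Applying Theorem \ref{thm:cob}(2) then yields that $sl(2)_\alpha = (sl(2),[-,-]_\alpha,\Delta_\alpha,\alpha,r)$ is a quasi-triangular Hom-Lie bialgebra, and the Hom-Lie bialgebra structure produced by the theorem (namely $[-,-]_\alpha = \alpha \circ [-,-]$ and $\Delta_\alpha = \Delta \circ \alpha$) is exactly the one described in \eqref{slstructure}. There is essentially no obstacle here: the work was done in Theorem \ref{thm:cob}, and the only content left for this corollary is the one-line verification that the standard $r$-matrix is fixed by the scaling morphisms $\alpha$.
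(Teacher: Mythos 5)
Your proposal is correct and follows essentially the same route as the paper: invoke the second part of Theorem~\ref{thm:cob}, cite that $sl(2)$ with cobracket \eqref{slcobracket} and $r$ from \eqref{eq:slr} is a quasi-triangular Lie bialgebra, and reduce everything to the one-line check $\alpha^{\otimes 2}(r) = r$. The only difference is that you spell out the verification $\Delta = \ad(r)$ for $sl(2)$ where the paper simply cites the literature; this is harmless extra detail.
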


\begin{proof}
It is known that $sl(2)$ is a quasi-triangular Lie bialgebra \cite{dri83,dri87} (or \cite[Example 8.1.10]{majid}) with the Lie cobracket $\Delta$ \eqref{slcobracket} and the classical $r$-matrix $r = X_+ \otimes X_- + \frac{1}{4}H \otimes H$ \eqref{eq:slr}.  In Corollary ~\ref{cor1:HLsl}, the maps $\alpha \colon sl(2) \to sl(2)$ are Lie bialgebra morphisms (computed in Proposition ~\ref{prop:slbimorphism}) of the form $\alpha(H) = H$ and $\alpha(X_{\pm}) = b^{\pm 1}X_{\pm}$ for some non-zero scalar $b$.  By the second part of Theorem ~\ref{thm:cob}, to show that $sl(2)_\alpha$ is a quasi-triangular Hom-Lie bialgebra, it remains to show $\alpha^{\otimes 2}(r) = r$.  This is easy, since
\[
\begin{split}
\alpha^{\otimes 2}(r) &= \alpha(X_+) \otimes \alpha(X_-) + \frac{1}{4}\alpha(H) \otimes \alpha(H)\\
& = (bX_+) \otimes (b^{-1}X_-) + \frac{1}{4}H \otimes H = r.
\end{split}
\]
\end{proof}

In fact, the only Lie algebra morphisms on $sl(2)$ that fix $r  = X_+ \otimes X_- + \frac{1}{4}H \otimes H$ (i.e., $\alpha^{\otimes 2}(r) = r$) are the non-zero Lie bialgebra morphisms (Proposition ~\ref{prop:slbimorphism}).  This follows from the classification of non-zero Lie algebra morphisms on $sl(2)$ (\eqref{alpha1} - \eqref{alpha3}) and Propositions \ref{prop:sl1} - \ref{prop:sl3} (the case $n = 1$).


In the following result, we describe some sufficient conditions under which a Hom-Lie algebra becomes a coboundary Hom-Lie bialgebra.  It is a generalization of \cite[Proposition 8.1.3]{majid}, which deals with Lie algebras and coboundary Lie bialgebras.  In what follows, for an element $r = \sum r_1 \otimes r_2$, we write $r_{21}$ for $\tau(r) = \sum r_2 \otimes r_1$.

\begin{theorem}
\label{thm:cob-char}
Let $(L,[-,-],\alpha)$ be a Hom-Lie algebra and $r \in L^{\otimes 2}$ be an element such that $\alpha^{\otimes 2}(r) = r$, $r_{21} = -r$, and
\begin{equation}
\label{eq:cobound}
\alpha^{\otimes 3}(\ad_x([[r,r]]^\alpha)) = 0
\end{equation}
for all $x \in L$, where $[[r,r]]^\alpha$ is defined in \eqref{eq:chybe}.  Define $\Delta \colon L \to L^{\otimes 2}$ by $\Delta(x) = \ad_x(r)$ as in \eqref{eq:Delta=ad}.  Then $(L,[-,-],\Delta,\alpha,r)$ is a coboundary Hom-Lie bialgebra.
\end{theorem}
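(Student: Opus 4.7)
The plan is to check the axioms of a coboundary Hom-Lie bialgebra in turn. The coboundary data $\alpha^{\otimes 2}(r)=r$ and $\Delta=\ad(r)$ are already present in the hypothesis, so what remains is to verify (a) that $(L,\Delta,\alpha)$ is a Hom-Lie coalgebra and (b) the bracket--cobracket compatibility \eqref{eq:compatibility}.

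First I would dispatch the two easy parts of the coalgebra axioms. Co-multiplicativity $\Delta\circ\alpha=\alpha^{\otimes 2}\circ\Delta$ is a one-line calculation: substitute $r=\alpha^{\otimes 2}(r)$ inside $\ad_{\alpha(x)}(r)$ and pull the outer $\alpha$'s out of every bracket via multiplicativity to land on $\alpha^{\otimes 2}(\ad_x(r))$. Anti-symmetry $\tau\circ\Delta=-\Delta$ follows from $r_{21}=-r$: applying $\tau$ to $\Delta(x)=[x,r_1]\otimes\alpha(r_2)+\alpha(r_1)\otimes[x,r_2]$ and relabeling the dummy summation indices via $r_{21}=-r$ reverses the overall sign.

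For the compatibility condition \eqref{eq:compatibility} I would invoke the identity \eqref{compose} stated in Remark~\ref{rk:delta} (the content of the forthcoming Lemma~\ref{lem:perturb}):
$$\ad_{[x,y]}(r)=\ad_{\alpha(x)}(\ad_y(r))-\ad_{\alpha(y)}(\ad_x(r)).$$
Since $\Delta=\ad(r)$, this is precisely \eqref{eq:compatibility}; note that anti-symmetry of $r$ is not needed for this step.

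The main obstacle is the Hom-co-Jacobi identity $\cyclicsum(\alpha\otimes\Delta)\circ\Delta=0$. My plan is to establish the refined identity
$$\cyclicsum(\alpha\otimes\Delta)(\Delta(x))=\alpha^{\otimes 3}\bigl(\ad_x([[r,r]]^\alpha)\bigr)$$
for every $x\in L$, after which the hypothesis \eqref{eq:cobound} delivers the desired vanishing. To set this up I would take a second copy $r'=\sum r_1'\otimes r_2'$ of $r$, expand $\Delta(x)=\ad_x(r)$, push the inner $\Delta$ past $\alpha$ via co-multiplicativity, then expand it as $\ad_{(-)}(r')$ to obtain four tensor terms. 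Using (i) multiplicativity to fold every $[\alpha(-),\alpha(-)]$ into $\alpha[-,-]$, (ii) $\alpha^{\otimes 2}(r)=r$ and $\alpha^{\otimes 2}(r')=r'$ to absorb the surplus $\alpha$'s that land on $r$- and $r'$-slots, (iii) anti-symmetry $r_{21}=-r$ to merge the three cyclic images into the three bracket expressions \eqref{eq:r123}, and (iv) the Hom-Jacobi identity \eqref{eq:HomJacobi} applied to double-bracket contributions of the form $[[x,r_2],\alpha(r_1')]$ and its cyclic partners, the cyclic sum should collapse to $\alpha^{\otimes 3}$ applied to $\ad_x$ of $[r_{12},r_{13}]+[r_{12},r_{23}]+[r_{13},r_{23}]$. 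The difficulty here is purely combinatorial bookkeeping -- the argument parallels the classical calculation in \cite[Prop.~8.1.3]{majid}, but one must track carefully where each $\alpha$ ends up; no new conceptual idea is required, and the weakened hypothesis $\alpha^{\otimes 3}(\ad_x([[r,r]]^\alpha))=0$ is exactly tailored to what the Hom-decorations leave behind on the right-hand side.
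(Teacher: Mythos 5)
Your proposal is correct and follows essentially the same route as the paper: the easy checks (co-multiplicativity from $\alpha^{\otimes 2}(r)=r$, anti-symmetry from $r_{21}=-r$), the compatibility condition via the identity $\ad_{[x,y]}(r)=\ad_{\alpha(x)}(\ad_y(r))-\ad_{\alpha(y)}(\ad_x(r))$ (the paper's Lemma~\ref{lem:perturb}, which indeed uses only $\alpha^{\otimes 2}(r)=r$), and the key refined identity $\cyclicsum(\alpha\otimes\Delta)(\Delta(x))=\alpha^{\otimes 3}(\ad_x([[r,r]]^\alpha))$, which is exactly the paper's \eqref{cojacobi}--\eqref{cojacobiad} established by the same twelve-term expansion and the same ingredients (multiplicativity, $\alpha^{\otimes 2}(r)=r$, $r_{21}=-r$, and the Hom-Jacobi identity).
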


\begin{proof}
We will show that (i) $\Delta = \ad(r)$ commutes with $\alpha$, (ii) $\Delta$ is anti-symmetric, (iii) the compatibility condition \eqref{eq:compatibility} holds, and (iv) the condition \eqref{eq:cobound} is equivalent to the Hom-co-Jacobi identity of $\Delta$.

Write $r$ as $\sum r_1 \otimes r_2$.  To show that $\Delta = \ad(r)$ commutes with $\alpha$, pick an element $x \in L$.  Then $\Delta(\alpha(x))$ and $\alpha^{\otimes 2}(\Delta(x))$ are both equal to $\alpha([x,r_1]) \otimes \alpha^2(r_2) + \alpha^2(r_1) \otimes \alpha([x,r_2])$.  This follows from the definition $\Delta = \ad(r)$, $\alpha([-,-]) = [-,-] \circ \alpha^{\otimes 2}$, and the assumption $\alpha^{\otimes 2}(r) = r$.

Now we show that $\Delta = \ad(r)$ is anti-symmetric.  We have
\begin{equation}
\label{deltaantisym}
\allowdisplaybreaks
\begin{split}
\Delta(x) + \tau(\Delta(x)) &= [x,r_1] \otimes \alpha(r_2) + \alpha(r_1) \otimes [x,r_2] + \alpha(r_2) \otimes [x,r_1] + [x,r_2] \otimes \alpha(r_1)\\
&= \ad_x(r + r_{21}) = \ad_x(0) = 0,
\end{split}
\end{equation}
since $r + r_{21} = \sum (r_1 \otimes r_2 + r_2 \otimes r_1)$.

We will prove that the compatibility condition \eqref{eq:compatibility} holds in Lemma ~\ref{lem:perturb} below.

Finally, we show that the Hom-co-Jacobi identity (Definition ~\ref{def:hlc}) of $\Delta = \ad(r)$ is equivalent to \eqref{eq:cobound}.  Let us unwrap the Hom-co-Jacobi identity.  Fix an element $x \in L$, and let $r' = \sum r_1' \otimes r_2'$ be another copy of $r$.  Then we write
\[
\allowdisplaybreaks
\begin{split}
\gamma &= (\alpha \otimes \Delta)(\Delta(x))
= (\alpha \otimes \Delta)([x,r_1] \otimes \alpha(r_2) + \alpha(r_1) \otimes [x,r_2])\\
&= \alpha([x,r_1]) \otimes [\alpha(r_2),r_1'] \otimes \alpha(r_2') + \alpha([x,r_1]) \otimes \alpha(r_1') \otimes [\alpha(r_2),r_2']\\
&\relphantom{} + \alpha^2(r_1) \otimes [[x,r_2],r_1'] \otimes \alpha(r_2') + \alpha^2(r_1) \otimes \alpha(r_1') \otimes [[x,r_2],r_2']\\
&= A_1 + B_1 + C_1 + D_1.
\end{split}
\]
In the last line above, we defined $A_1$ as $\alpha([x,r_1]) \otimes [\alpha(r_2),r_1'] \otimes \alpha(r_2')$, $B_1$ as $\alpha([x,r_1]) \otimes \alpha(r_1') \otimes [\alpha(r_2),r_2']$, and so forth.  Recall from \ref{notations} that $\sigma$ is the cyclic permutation given by $\sigma(1) = 2$, $\sigma(2) = 3$, and $\sigma(3) = 1$.  Applying $\sigma$ and $\sigma^2$ to $\gamma$ above, we obtain four similar but permutated terms in each case.  As above, we write $\sigma(\gamma)$ as $A_2 + B_2 + C_2 + D_2$ and $\sigma^2(\gamma)$ as $A_3 + B_3 + C_3 + D_3$, where $A_2 = \sigma(A_1) = \alpha(r_2') \otimes \alpha([x,r_1]) \otimes [\alpha(r_2),r_1']$, $A_3 = \sigma^2(A_1) = [\alpha(r_2),r_1'] \otimes \alpha(r_2') \otimes \alpha([x,r_1])$, and similarly for $B_i$, $C_i$, and $D_i$ ($i = 2, 3$).  With these notations, the Hom-co-Jacobi identity of $\Delta = \ad(r)$ (applied to $x$) becomes
\begin{equation}
\label{cojacobi}
0 = \cyclicsum (\alpha \otimes \Delta)(\Delta(x)) = \gamma + \sigma(\gamma) + \sigma^2(\gamma)
= \sum_{i=1}^3 (A_i + B_i + C_i + D_i).
\end{equation}
Therefore, to prove the equivalence between the Hom-co-Jacobi identity of $\Delta$ and \eqref{eq:cobound}, it suffices to show
\begin{equation}
\label{cojacobiad}
\alpha^{\otimes 3}(\ad_x([[r,r]]^\alpha)) = \sum_{i=1}^3 (A_i + B_i + C_i + D_i),
\end{equation}
which we will prove in Lemma ~\ref{lem:ABCD} below.

The proof of Theorem ~\ref{thm:cob-char} will be complete once we prove the two Lemmas below.
\end{proof}

\begin{lemma}
\label{lem:perturb}
Let $(L,[-,-],\alpha)$ be a Hom-Lie algebra and $r \in L^{\otimes 2}$ be an element such that $\alpha^{\otimes 2}(r) = r$.  Then $\Delta = \ad(r) \colon L \to L^{\otimes 2}$ satisfies \eqref{eq:compatibility}, i.e.,
\[
\ad_{[x,y]}(r) = \ad_{\alpha(x)}(\ad_y(r)) - \ad_{\alpha(y)}(\ad_x(r))
\]
for $x,y \in L$.
\end{lemma}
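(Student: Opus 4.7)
The plan is to expand both sides of the desired identity using the definition \eqref{eq:ad} of the adjoint map, and then collapse the right-hand side to the left using, in turn, (a) multiplicativity of $\alpha$, (b) the Hom-Jacobi identity, and (c) the hypothesis $\alpha^{\otimes 2}(r)=r$.

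First I would expand $\ad_{\alpha(x)}(\ad_y(r))$ into four summands by applying $\ad_{\alpha(x)}$ to each of the two tensor summands of $\ad_y(r) = [y,r_1]\otimes\alpha(r_2) + \alpha(r_1)\otimes[y,r_2]$, and similarly for $\ad_{\alpha(y)}(\ad_x(r))$. Two of the four summands in $\ad_{\alpha(x)}(\ad_y(r))$ are ``mixed'' terms $\alpha([y,r_1])\otimes[\alpha(x),\alpha(r_2)]$ and $[\alpha(x),\alpha(r_1)]\otimes\alpha([y,r_2])$; by multiplicativity $\alpha([a,b])=[\alpha(a),\alpha(b)]$ these equal $[\alpha(y),\alpha(r_1)]\otimes[\alpha(x),\alpha(r_2)]$ and $[\alpha(x),\alpha(r_1)]\otimes[\alpha(y),\alpha(r_2)]$. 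The analogous mixed terms coming from $\ad_{\alpha(y)}(\ad_x(r))$ (with $x,y$ swapped) are the same two tensors, so all four mixed terms cancel in the difference $\ad_{\alpha(x)}(\ad_y(r)) - \ad_{\alpha(y)}(\ad_x(r))$.

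Second, the four surviving terms assemble into
\[
\bigl([\alpha(x),[y,r_1]] - [\alpha(y),[x,r_1]]\bigr)\otimes\alpha^2(r_2) + \alpha^2(r_1)\otimes\bigl([\alpha(x),[y,r_2]] - [\alpha(y),[x,r_2]]\bigr).
\]
The Hom-Jacobi identity \eqref{eq:HomJacobi}, rearranged by anti-symmetry into the Leibniz-like form $[[x,y],\alpha(z)] = [\alpha(x),[y,z]] - [\alpha(y),[x,z]]$, collapses each parenthesized difference to $[[x,y],\alpha(r_i)]$. Finally, I use $\alpha^{\otimes 2}(r)=r$, written as $\sum\alpha(r_1)\otimes\alpha(r_2)=\sum r_1\otimes r_2$, to identify
\[
\sum [[x,y],\alpha(r_1)]\otimes\alpha^2(r_2) + \sum \alpha^2(r_1)\otimes[[x,y],\alpha(r_2)] = \sum[[x,y],r_1]\otimes\alpha(r_2) + \sum\alpha(r_1)\otimes[[x,y],r_2],
\]
which is precisely $\ad_{[x,y]}(r)$.

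The only real bookkeeping obstacle is to keep the three ingredients cleanly separated: multiplicativity is what kills the mixed cross-terms (no appeal to $\alpha$-invariance of $r$ is needed at that stage), Hom-Jacobi is used exactly once per tensor slot, and $\alpha$-invariance of $r$ enters only at the last step to strip one copy of $\alpha$ from each factor. Note that the argument does not use the anti-symmetry $r_{21}=-r$, which is consistent with the hypotheses of the lemma.
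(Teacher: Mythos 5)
Your proof is correct and is essentially the paper's own argument run in the opposite direction: the paper starts from $\ad_{[x,y]}(r)$, inserts $\alpha^{\otimes 2}(r)=r$, applies Hom-Jacobi, and then \emph{adds} four mixed terms summing to zero to reassemble the right-hand side, whereas you expand the right-hand side and observe that those same four mixed terms \emph{cancel} in the difference. The ingredients (multiplicativity for the mixed terms, Hom-Jacobi in the Leibniz form, and $\alpha$-invariance of $r$) and their roles are identical, so this is the same proof.
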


\begin{proof}
We will use $\alpha^{\otimes 2}(r) = r$, the anti-symmetry and the Hom-Jacobi identity of $[-,-]$ \eqref{eq:HomJacobi}, and $\alpha([-,-]) = [-,-]\circ\alpha^{\otimes 2}$ in the computation below.  For $x, y \in L$, we have:
\[
\allowdisplaybreaks
\begin{split}
\ad_{[x,y]}(r) &= [[x,y],r_1] \otimes \alpha(r_2) + \alpha(r_1) \otimes [[x,y],r_2]\\
&= [[x,y],\alpha(r_1)] \otimes \alpha^2(r_2) + \alpha^2(r_1) \otimes [[x,y],\alpha(r_2)]\\
&= \left\{[\alpha(x),[y,r_1]] + [\alpha(y),[r_1,x]]\right\} \otimes \alpha^2(r_2) + \alpha^2(r_1) \otimes \left\{[\alpha(x),[y,r_2]] + [\alpha(y),[r_2,x]]\right\}\\
&= [\alpha(x),[y,r_1]] \otimes \alpha^2(r_2) + \alpha([y,r_1]) \otimes [\alpha(x),\alpha(r_2)] + [\alpha(x),\alpha(r_1)] \otimes \alpha([y,r_2])\\
&\relphantom{} + \alpha^2(r_1) \otimes [\alpha(x),[y,r_2]] - [\alpha(y),[x,r_1]] \otimes \alpha^2(r_2) - \alpha([x,r_1]) \otimes [\alpha(y),\alpha(r_2)]\\
&\relphantom{} - [\alpha(y),\alpha(r_1)]\otimes \alpha([x,r_2]) - \alpha^2(r_1) \otimes [\alpha(y),[x,r_2]]\\
&= \ad_{\alpha(x)}\left([y,r_1] \otimes \alpha(r_2) + \alpha(r_1) \otimes [y,r_2]\right) - \ad_{\alpha(y)}\left([x,r_1] \otimes \alpha(r_2) + \alpha(r_1) \otimes [x,r_2]\right)\\
&= \ad_{\alpha(x)}(\ad_{y}(r)) - \ad_{\alpha(y)}(\ad_{x}(r)).
\end{split}
\]
In the fourth equality above, we added four terms (those not of the forms $\alpha^2(r_1) \otimes (\cdots)$ and $(\cdots) \otimes \alpha^2(r_2)$), which add up to zero.  Thus, the compatibility condition \eqref{eq:compatibility} holds.
\end{proof}

\begin{lemma}
\label{lem:ABCD}
The condition ~\eqref{cojacobiad} holds.
\end{lemma}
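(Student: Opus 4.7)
The plan is to prove \eqref{cojacobiad} by directly expanding the left-hand side and then, via the Hom-Jacobi identity \eqref{eq:HomJacobi} together with the hypothesis $\alpha^{\otimes 2}(r) = r$, reorganizing it into the twelve terms $\sum_{i=1}^3 (A_i + B_i + C_i + D_i)$ on the right.  The overall strategy parallels the classical computation showing that $[[r,r]] = 0$ is equivalent to the co-Jacobi identity for $\Delta = \ad(r)$ in an ordinary Lie bialgebra; the novelty here is that every bracket carries extra powers of $\alpha$, whose placement must be tracked using $\alpha \circ [-,-] = [-,-] \circ \alpha^{\otimes 2}$ and the fact that $\alpha^{\otimes 2}(r) = r$.

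Concretely, I would first expand $\alpha^{\otimes 3}(\ad_x([[r,r]]^\alpha))$ by unpacking the three CHYBE summands from \eqref{eq:r123} and applying $\ad_x$ as in \eqref{eq:ad}.  Each CHYBE summand is a $3$-fold tensor containing a single bracket, so $\ad_x$ yields three terms per summand, for a total of nine.  Three of these nine terms carry a nested double bracket of the form $\alpha[x,[r_i,r_j']]$ in one slot, while the other six carry one single bracket $[r_i, r_j']$ (coming from the CHYBE summand) and one single bracket $[x,\alpha(r_k)]$ (coming from $\ad_x$) in two different slots.

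Next, I would rewrite each of the three double-bracket terms via Hom-Jacobi.  For example, applying \eqref{eq:HomJacobi} with $(y,z) = (\alpha(r_1),\alpha(r_1'))$, together with the anti-symmetry of $[-,-]$ and $\alpha[-,-] = [-,-]\circ\alpha^{\otimes 2}$, yields
\[
\alpha[x,[r_1,r_1']] \;=\; [\alpha(x),[\alpha(r_1),\alpha(r_1')]] \;=\; [[x,\alpha(r_1)],\alpha^2(r_1')] - [[x,\alpha(r_1')],\alpha^2(r_1)].
\]
Thus each of the three double-bracket terms splits into two, producing $6 + 6 = 12$ terms in total.  After using $\alpha^{\otimes 2}(r) = r$ to freely shift matching powers of $\alpha$ between the two factors of any copy of $r$, each of these twelve terms should match exactly one of the $A_i, B_i, C_i, D_i$ for $i = 1, 2, 3$: the six post-Hom-Jacobi terms (carrying a double bracket $[[x,r_i],r_j']$) pair with the $C_i$ and $D_i$, while the six original two-single-bracket terms pair with the $A_i$ and $B_i$.

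The main obstacle will be the bookkeeping of the cyclic-sum structure on the right-hand side.  The three cyclic permutations $Id$, $\sigma$, $\sigma^2$ redistribute brackets among the tensor slots, and these placements must be reconciled with the natural slot placements coming from the three CHYBE summands $[r_{12},r_{13}]$, $[r_{12},r_{23}]$, $[r_{13},r_{23}]$ on the left.  Because both the CHYBE summands and the cyclic permutations are themselves cyclic in character, a three-way pairing exists, but verifying each slot-by-slot identification (including signs and powers of $\alpha$) requires care.  A related delicate point is that within individual summands one must freely relabel the dummy variables $r \leftrightarrow r'$ and, in some cases, invoke the anti-symmetry $r_{21} = -r$; one should check that no spurious signs are introduced beyond those already built into the $A_i, B_i, C_i, D_i$.
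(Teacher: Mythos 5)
Your plan is correct and is essentially the paper's own computation run in the opposite direction: the paper starts from the twelve terms $A_i,B_i,C_i,D_i$, groups them four at a time, and uses the Hom-Jacobi identity to \emph{combine} $C_3+D_2$ (and its analogues) into the double-bracket term $[\alpha(x),[r_1,r_1']]\otimes\alpha^2(r_2)\otimes\alpha^2(r_2')$, whereas you expand $\ad_x$ of the three CHYBE summands and use the same identity to \emph{split} the double brackets. The decomposition you anticipate (six two-single-bracket terms matching the $A_i,B_i$, six post-Jacobi terms matching the $C_i,D_i$, with relabeling $r\leftrightarrow r'$, the sign from $r_{21}=-r$, and $\alpha^{\otimes 2}(r)=r$ handling the powers of $\alpha$) is exactly the one the paper verifies.
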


\begin{proof}
It suffices to show the following three equalities:
\begin{subequations}
\begin{align}
\alpha^{\otimes 3}(\ad_x([r_{12},r_{13}])) &= A_3 + B_2 + C_3 + D_2,\label{ad1213}\\
\alpha^{\otimes 3}(\ad_x([r_{12},r_{23}])) &= A_1 + B_3 + C_1 + D_3,\label{ad1223}\\
\alpha^{\otimes 3}(\ad_x([r_{13},r_{23}])) &= A_2 + B_1 + C_2 + D_1,\label{ad1323}
\end{align}
\end{subequations}
where the three brackets, which add up to $[[r,r]]^\alpha$, are defined in \eqref{eq:r123}.  The proofs for the three equalities are very similar, so we will only give the proof of \eqref{ad1213}.  Since $r = r'$ and $r_{21} = -r$, we have
\begin{equation}
\label{A3}
\begin{split}
A_3 &= [\alpha(r_2),r_1'] \otimes \alpha(r_2') \otimes \alpha([x,r_1])\\
&= [\alpha(r_2'),r_1] \otimes \alpha(r_2) \otimes \alpha([x,r_1'])\\
&= -[\alpha(r_1'),r_1] \otimes \alpha(r_2) \otimes \alpha([x,r_2'])\\
&= -[\alpha^2(r_1'),\alpha^2(r_1)] \otimes \alpha^3(r_2) \otimes \alpha[x,\alpha(r_2')]\\
&= \alpha^{\otimes 3}\left\{\alpha([r_1,r_1']) \otimes \alpha^2(r_2) \otimes [x,\alpha(r_2')]\right\}.
\end{split}
\end{equation}
In the fourth equality we used $(\alpha^{\otimes 2})^2(r) = r$ and $\alpha^{\otimes 2}(r') = r'$.  In the last equality we used the anti-symmetry of $[-,-]$ and $\alpha([-,-]) = [-,-] \circ \alpha^{\otimes 2}$.  Similar computations give
\begin{subequations}
\label{BCD}
\allowdisplaybreaks
\begin{align}
B_2 &= [\alpha(r_2),r_2'] \otimes \alpha([x,r_1]) \otimes \alpha(r_1')
= \alpha^{\otimes 3}\left\{\alpha([r_1,r_1']) \otimes [x,\alpha(r_2)] \otimes \alpha^2(r_2')\right\},\label{B2}\\
C_3 &= [[x,r_2],r_1'] \otimes \alpha(r_2') \otimes \alpha^2(r_1) = [[r_2',x],\alpha(r_2)] \otimes \alpha^2(r_1) \otimes \alpha^2(r_1'),\label{C3}\\
D_2 &= [[x,r_2],r_2'] \otimes \alpha^2(r_1) \otimes \alpha(r_1') = [[x,r_2],\alpha(r_2')] \otimes \alpha^2(r_1) \otimes \alpha^2(r_1').\label{D2}
\end{align}
\end{subequations}
Using, in addition, the anti-symmetry and the Hom-Jacobi identity of $[-,-]$, we add $C_3$ and $D_2$:
\begin{equation}
\label{C3D2}
\begin{split}
C_3 + D_2 &=  \left\{[[r_2',x],\alpha(r_2)] + [[x,r_2],\alpha(r_2')]\right\} \otimes \alpha^2(r_1) \otimes \alpha^2(r_1')\\
&= [\alpha(x),[r_2,r_2']] \otimes \alpha^2(r_1) \otimes \alpha^2(r_1')\\
&= [\alpha(x),[r_1,r_1']] \otimes \alpha^2(r_2) \otimes \alpha^2(r_2')\\
&=[\alpha(x),[\alpha(r_1),\alpha(r_1')]] \otimes \alpha^3(r_2) \otimes \alpha^3(r_2')\\
&= \alpha^{\otimes 3}\left\{[x,[r_1,r_1']] \otimes \alpha^2(r_2) \otimes \alpha^2(r_2')\right\}.
\end{split}
\end{equation}
Combining \eqref{A3}, \eqref{B2}, and \eqref{C3D2} and using the definition \eqref{eq:ad} of $\ad_x$, we now conclude that
\[
\begin{split}
A_3 + B_2 + C_3 + D_2
&= \alpha^{\otimes 3} \left\{[x,[r_1,r_1']] \otimes \alpha^2(r_2) \otimes \alpha^2(r_2')\right\} + \alpha^{\otimes 3}\left\{\alpha([r_1,r_1']) \otimes [x,\alpha(r_2)] \otimes \alpha^2(r_2')\right\}\\
&\relphantom{} + \alpha^{\otimes 3}\left\{\alpha([r_1,r_1']) \otimes \alpha^2(r_2) \otimes [x,\alpha(r_2')]\right\}\\
&= \alpha^{\otimes 3}\left\{\ad_x([r_1,r_1'] \otimes \alpha(r_2) \otimes \alpha(r_2'))\right\}\\
&= \alpha^{\otimes 3}(\ad_x([r_{12},r_{13}])),
\end{split}
\]
which proves \eqref{ad1213}.

The equalities \eqref{ad1223} and \eqref{ad1323} are proved by very similar computations.  Therefore, the equality \eqref{cojacobiad} holds.  Together with \eqref{cojacobi} we have shown that the Hom-co-Jacobi identity of $\Delta = \ad(r)$ is equivalent to $\alpha^{\otimes 3}(\ad_x([[r,r]]^\alpha)) = 0$.
\end{proof}

The following result is an immediate consequence of Theorem ~\ref{thm:cob-char}.  It gives sufficient conditions under which a Hom-Lie algebra becomes a quasi-triangular Hom-Lie bialgebra.

\begin{corollary}
\label{cor:cob}
Let $(L,[-,-],\alpha)$ be a Hom-Lie algebra and $r \in L^{\otimes 2}$ be an element such that $\alpha^{\otimes 2}(r) = r$, $r_{21} = -r$, and $[[r,r]]^\alpha = 0$ \eqref{eq:chybe}.  Then $(L,[-,-],\ad(r),\alpha,r)$ is a quasi-triangular Hom-Lie bialgebra.
\end{corollary}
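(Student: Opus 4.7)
The plan is to deduce this as a direct specialization of Theorem~\ref{thm:cob-char}. Observe that the hypotheses of the present corollary are strictly stronger than those of Theorem~\ref{thm:cob-char}: we are given $\alpha^{\otimes 2}(r) = r$, $r_{21} = -r$, and the full CHYBE $[[r,r]]^\alpha = 0$, whereas Theorem~\ref{thm:cob-char} only requires the weaker condition $\alpha^{\otimes 3}(\ad_x([[r,r]]^\alpha)) = 0$ for all $x \in L$.

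First I would note that the CHYBE assumption $[[r,r]]^\alpha = 0$ trivially implies $\ad_x([[r,r]]^\alpha) = \ad_x(0) = 0$, and hence $\alpha^{\otimes 3}(\ad_x([[r,r]]^\alpha)) = 0$ for all $x \in L$. This verifies the cocycle-like hypothesis~\eqref{eq:cobound} needed by Theorem~\ref{thm:cob-char}. Applying that theorem, we conclude that $(L,[-,-],\ad(r),\alpha,r)$ is a coboundary Hom-Lie bialgebra.

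Second, to upgrade ``coboundary'' to ``quasi-triangular,'' I would simply invoke Definition~\ref{def:coboundary}: a quasi-triangular Hom-Lie bialgebra is a coboundary Hom-Lie bialgebra in which $r$ is additionally a solution of the CHYBE. Since $[[r,r]]^\alpha = 0$ is part of our hypothesis, this extra condition is already in hand, completing the proof.

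There is no real obstacle here; the work was done in Theorem~\ref{thm:cob-char} (together with Lemmas~\ref{lem:perturb} and~\ref{lem:ABCD}), which established all four Hom-Lie bialgebra axioms for $\Delta = \ad(r)$ under the weaker hypothesis. The corollary is essentially a repackaging that replaces~\eqref{eq:cobound} by its most natural sufficient condition, namely the vanishing of $[[r,r]]^\alpha$ itself, which also happens to be exactly what is needed for quasi-triangularity.
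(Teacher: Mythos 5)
Your proof is correct and matches the paper's intent exactly: the paper presents this corollary as an immediate consequence of Theorem~\ref{thm:cob-char}, obtained precisely by noting that $[[r,r]]^\alpha = 0$ forces $\ad_x([[r,r]]^\alpha) = \ad_x(0) = 0$, hence the hypothesis \eqref{eq:cobound}, and then invoking Definition~\ref{def:coboundary} for quasi-triangularity. Nothing is missing.
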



To end this section, we provide several equivalent characterizations of the CHYBE \eqref{eq:chybe} in a coboundary Hom-Lie bialgebra.  Let us first define some maps that will be used in the following result.  Fix a coboundary Hom-Lie bialgebra $L = (L,[-,-],\Delta,\alpha,r)$ with $r = \sum r_1 \otimes r_2$.  Recall that $L^* = \Hom(L,\bk)$ is the linear dual of $L$.  Define the linear maps $\rho_1,\rho_2,\lambda_1,\lambda_2 \colon L^* \to L$ as follows:
\begin{equation}
\label{rholambda}
\begin{split}
\rho_1(\phi) &= \langle \phi,\alpha(r_1)\rangle r_2, \qquad
\rho_2(\phi) = \langle \phi, r_1\rangle \alpha(r_2),\\
\lambda_1(\phi) &= \alpha(r_1)\langle \phi,r_2\rangle,\qquad
\lambda_2(\phi) = r_1\langle \phi,\alpha(r_2)\rangle
\end{split}
\end{equation}
for $\phi \in L^*$.  The following result is a generalization of \cite[Lemma 8.1.6]{majid}, which deals with coboundary Lie bialgebras.

\begin{theorem}
\label{thm:quasi}
Let $(L,[-,-],\Delta,\alpha,r)$ be a coboundary Hom-Lie bialgebra.  Then the following statements are equivalent, in which the last two statements only apply when $L$ is finite dimensional.
\begin{enumerate}
\item
$L$ is a quasi-triangular Hom-Lie bialgebra, i.e., $[[r,r]]^\alpha = 0$ \eqref{eq:chybe}.
\item
$(\alpha \otimes \Delta)(r) = -[r_{12},r_{13}]$ \eqref{eq:r123}.
\item
$(\Delta \otimes \alpha)(r) = [r_{13},r_{23}]$ \eqref{eq:r123}.
\item
The diagram
\begin{equation}
\label{rhobracket}
\SelectTips{cm}{10}
\xymatrix{
L^* \otimes L^* \ar[r]^-{[-,-]} \ar[d]_{\rho_1^{\otimes 2}} & L^* \ar[d]^{\rho_2}\\
L \otimes L \ar[r]^-{[-,-]} & L
}
\end{equation}
commutes, where the bracket in $L^*$ is defined as in \eqref{eq:dualmaps}.
\item
The diagram
\begin{equation}
\label{lambdabracket}
\SelectTips{cm}{10}
\xymatrix{
L^* \otimes L^* \ar[r]^-{[-,-]} \ar[d]_{\lambda_2^{\otimes 2}} & L^* \ar[d]^{\lambda_1}\\
L \otimes L \ar[r]^-{-[-,-]} & L
}
\end{equation}
commutes.
\item
The diagram
\begin{equation}
\label{rhoDelta}
\SelectTips{cm}{10}
\xymatrix{
L^* \ar[r]^-{\Delta} \ar[d]_{\rho_1} & L^* \otimes L^* \ar[d]^{\rho_2^{\otimes 2}}\\
L \ar[r]^-{-\Delta} & L \otimes L
}
\end{equation}
commutes, where the cobracket $\Delta$ on $L^*$ is defined as in \eqref{eq:dualmaps}.
\item
The diagram
\begin{equation}
\label{lambdaDelta}
\SelectTips{cm}{10}
\xymatrix{
L^* \ar[r]^-{\Delta} \ar[d]_{\lambda_2} & L^* \otimes L^* \ar[d]^{\lambda_1^{\otimes 2}}\\
L \ar[r]^-{\Delta} & L \otimes L
}
\end{equation}
commutes.
\end{enumerate}
\end{theorem}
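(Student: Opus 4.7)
The plan is to establish the seven equivalences in two stages: first (1)$\Leftrightarrow$(2)$\Leftrightarrow$(3), by direct computation from $\Delta = \ad(r)$, and then (1)$\Leftrightarrow$(4),(5),(6),(7), by dualizing each diagram and invoking non-degeneracy of the pairing in finite dimension.

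For the first stage, let $r' = \sum r_1'\otimes r_2'$ denote a second copy of $r$, and unfold $(\alpha\otimes\Delta)(r) = \alpha(r_1)\otimes\ad_{r_2}(r')$ using the definition of $\Delta$. Reading the result off against the definitions \eqref{eq:r123} term by term gives
\[
(\alpha\otimes\Delta)(r) = \alpha(r_1)\otimes[r_2,r_1']\otimes\alpha(r_2') + \alpha(r_1)\otimes\alpha(r_1')\otimes[r_2,r_2'] = [r_{12},r_{23}] + [r_{13},r_{23}],
\]
so (2) is literally the statement $[[r,r]]^\alpha = 0$. For (3), the analogous unfolding yields
\[
(\Delta\otimes\alpha)(r) = [r_1,r_1']\otimes\alpha(r_2')\otimes\alpha(r_2) + \alpha(r_1')\otimes[r_1,r_2']\otimes\alpha(r_2),
\]
whose two summands differ from $[r_{12},r_{13}]$ and $[r_{12},r_{23}]$ only by swapping the two copies of $r$. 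Since $r = r'$ as tensors, relabeling the double sum is permitted, and the anti-symmetry of $[-,-]$ then introduces an overall sign, so $(\Delta\otimes\alpha)(r) = -[r_{12},r_{13}] - [r_{12},r_{23}]$ and (3) again reduces to $[[r,r]]^\alpha = 0$.

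For the second stage, I assume $\dim L < \infty$, so the pairing $L^* \otimes L \to \bk$ is non-degenerate and equality of two linear maps $(L^*)^{\otimes m} \to L^{\otimes n}$ is equivalent to equality of the associated tensors in $L^{\otimes m+n}$. For diagram (4), I compute both $\rho_2([\phi,\psi])$ and $[\rho_1(\phi),\rho_1(\psi)]$ on arbitrary $\phi,\psi\in L^*$, using the dual bracket $\langle[\phi,\psi],x\rangle = \langle\phi\otimes\psi,\Delta(x)\rangle$, the definitions of $\rho_1,\rho_2$, and $\Delta = \ad(r)$. Collecting terms and applying the same $r\leftrightarrow r'$ re-indexing used for (3), the difference of the two sides is precisely the pairing of $\phi\otimes\psi$ with $\pm[[r,r]]^\alpha$; non-degeneracy then gives (4)$\Leftrightarrow$(1). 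Diagrams (5), (6), and (7) fit the same template: each commutativity unwraps, via the dual (co)bracket and the appropriate contraction $\rho_i$ or $\lambda_j$, into a tensor equation in $L^{\otimes 3}$ that collapses after re-indexing to $[[r,r]]^\alpha = 0$. The signs $-[-,-]$ in (5) and $-\Delta$ in (6) are exactly what compensates for the different placement of $\alpha$ in $\lambda_i$ versus $\rho_i$.

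The main obstacle is the combinatorial bookkeeping in the second stage. Each of the four dual diagrams involves two distinct copies of $r$---one inside the (co)bracket on $L^*$ and one inside the contraction $\rho_i$ or $\lambda_j$---and the $\alpha$ factors threaded through the twisted adjoint action \eqref{eq:ad} must be placed consistently. I do not expect any new ideas beyond the two tools already exercised in the first stage, namely unfolding $\Delta = \ad(r)$ and re-indexing the two copies of $r$ modulo anti-symmetry of $[-,-]$; but fixing a uniform slot-ordering convention and getting every sign right across four similar-but-distinct diagrams is where the bulk of the care is needed.
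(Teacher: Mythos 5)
Your proposal is correct and follows essentially the same route as the paper: it first establishes the two identities $(\alpha\otimes\Delta)(r)=[r_{12},r_{23}]+[r_{13},r_{23}]$ and $(\Delta\otimes\alpha)(r)=-[r_{12},r_{13}]-[r_{12},r_{23}]$ to get (1)$\Leftrightarrow$(2)$\Leftrightarrow$(3), and then identifies each diagram's failure to commute with the pairing of $[[r,r]]^\alpha$ against $\phi\otimes\psi\otimes Id$ (respectively $\phi\otimes Id\otimes Id$), exactly as in the paper's proof. The only cosmetic difference is that the paper needs finite dimensionality only for the diagrams involving the dual cobracket, i.e.\ (6) and (7), whereas you invoke it for all of (4)--(7); this does not affect correctness of the stated equivalences.
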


\begin{proof}
The equivalence between the first three statements clearly follows from
\begin{equation}
\label{alphaDelta}
(\alpha \otimes \Delta)(r) = [r_{12},r_{23}] + [r_{13},r_{23}] \quad\text{and}\quad (\Delta \otimes \alpha)(r) = - [r_{12},r_{13}] - [r_{12},r_{23}].
\end{equation}
To see that \eqref{alphaDelta} holds, let $r' = \sum r_1' \otimes r_2'$ be another copy of $r$.  Since $\Delta = \ad(r)$ \eqref{eq:Delta=ad}, the first equality in \eqref{alphaDelta} holds because:
\[
\begin{split}
(\alpha \otimes \Delta)(r_1 \otimes r_2)
&= \alpha(r_1) \otimes [r_2,r_1'] \otimes \alpha(r_2') + \alpha(r_1) \otimes \alpha(r_1') \otimes [r_2,r_2']\\
&= [r_{12},r_{23}] + [r_{13},r_{23}].
\end{split}
\]
The second equality in \eqref{alphaDelta} is proved similarly.  In view of the definitions \eqref{eq:chybe} and \eqref{eq:r123}, the equalities in \eqref{alphaDelta} imply that the first three statements in the Theorem are equivalent.

Next we show the equivalence between statements (1), (4), and (6).  Indeed, the CHYBE (i.e., $[[r,r]]^\alpha = 0$) holds if and only if
\[
\langle \phi \otimes \psi \otimes Id, -[[r,r]]^\alpha \rangle = 0
\]
for all $\phi, \psi \in L^*$.  Using the second equality in \eqref{alphaDelta} and the definition of $[-,-]$ in $L^*$ \eqref{eq:dualmaps}, we compute as follows:
\[
\allowdisplaybreaks
\begin{split}
\langle \phi \otimes \psi \otimes Id, -[[r,r]]^\alpha \rangle
&= \langle \phi \otimes \psi \otimes Id, (\Delta \otimes \alpha)(r) - \alpha(r_1) \otimes \alpha(r_1') \otimes [r_2,r_2']\rangle\\
&= \langle [\phi,\psi],r_1\rangle \alpha(r_2) - \langle\phi,\alpha(r_1)\rangle\langle\psi,\alpha(r_1')\rangle [r_2,r_2']\\
&= \rho_2([\phi,\psi]) - [\rho_1(\phi),\rho_1(\psi)].
\end{split}
\]
The last line above is equal to zero if and only if the square \eqref{rhobracket} is commutative.  This shows that statements (1) and (4) are equivalent.

Now we show the equivalence between statements (1) and (6).  The the CHYBE ($[[r,r]]^\alpha = 0$) holds if and only if
\[
\langle \phi \otimes Id \otimes Id, [[r,r]]^\alpha\rangle = 0
\]
for all $\phi \in L^*$.  Using the first equality in \eqref{alphaDelta} and the definition of $\Delta$ in $L^*$ \eqref{eq:dualmaps}, we compute as follows, where $\Delta(\phi) = \sum \phi_1 \otimes \phi_2$:
\[
\allowdisplaybreaks
\begin{split}
\langle \phi \otimes Id \otimes Id, [[r,r]]^\alpha\rangle
&= \langle \phi \otimes Id \otimes Id, [r_1,r_1'] \otimes \alpha(r_2) \otimes \alpha(r_2') + (\alpha \otimes \Delta)(r)\rangle\\
&= \langle \phi_1,r_1\rangle\langle \phi_2,r_1'\rangle \alpha(r_2) \otimes \alpha(r_2') + \langle \phi,\alpha(r_1)\rangle\Delta(r_2)\\
&= \rho_2^{\otimes 2}(\Delta(\phi)) + \Delta(\rho_1(\phi)).
\end{split}
\]
The last line above is equal to zero if and only if the square \eqref{rhoDelta} is commutative.  This shows that statements (1) and (6) are equivalent.  The equivalence between statements (1), (5), and (7) is proved similarly.
\end{proof}

\section{Cobracket perturbation in Hom-Lie bialgebras}
\label{sec:perturb}

The purpose of this final section is to study perturbation of cobrackets in Hom-Lie bialgebras, following Drinfel'd's perturbation theory of quasi-Hopf algebras \cite{dri83b,dri89b,dri90,dri91,dri92}.  The basic question is this:
\begin{quote}
If $(L,[-,-],\Delta,\alpha)$ is a Hom-Lie bialgebra (Definition ~\ref{def:hlb}) and $t \in L^{\otimes 2}$, under what conditions does the perturbed cobracket $\Delta_t = \Delta + \ad(t)$ give another Hom-Lie bialgebra $(L,[-,-],\Delta_t,\alpha)$?
\end{quote}
This is a natural question because $\Delta$ is a $1$-cocycle (Remark ~\ref{rk:cocycle}), $\ad(t)$ \eqref{eq:ad} is a $1$-coboundary provided $\alpha^{\otimes 2}(t) = t$ (Remark ~\ref{rk:delta}), and perturbation of cocycles by coboundaries is a natural concept in homological algebra.  Of course, we have more to worry about than just the cocycle condition \eqref{eq:compatibility} because $(L,\Delta_t,\alpha)$ must be a Hom-Lie coalgebra (Definition ~\ref{def:hlc}).


In the following result, we give some sufficient conditions under which the perturbed cobracket $\Delta_t$ gives another Hom-Lie bialgebra.  This is a generalization of \cite[Theorem 8.1.7]{majid}, which deals with cobracket perturbation in Lie bialgebras.  A result about cobracket perturbation in a quasi-triangular Hom-Lie bialgebra is given after the following result.  We also briefly discuss \emph{triangular} Hom-Lie bialgebra, which is the Hom version of Drinfel'd's triangular Lie bialgebra \cite{dri87}.

Let us recall some notations first.  For $t = \sum t_1 \otimes t_2 \in L^{\otimes 2}$, the symbol $t_{21}$ denotes $\tau(t) = \sum t_2 \otimes t_1$.  We extend the notation in \eqref{absolute} as follows: If $f(x,y)$ is an expression in the elements $x$ and $y$, we set
\[
|f(x,y)| = f(x,y) - f(y,x).
\]
For example, the compatibility condition ~\eqref{eq:compatibility} is $\Delta([x,y]) = |\ad_{\alpha(x)}(\Delta(y))|$, and the Hom-Jacobi identity \eqref{eq:HomJacobi} is equivalent to $[[x,y],\alpha(z)] = \left|[\alpha(x),[y,z]]\right|$.  Note that we have $|f(x,y) + g(x,y)| = |f(x,y)| + |g(x,y)|$.  Also recall the adjoint map $\ad_x \colon L^{\otimes n} \to L^{\otimes n}$ \eqref{eq:ad}.

\begin{theorem}
\label{thm:perturb}
Let $(L,[-,-],\Delta,\alpha)$ be a Hom-Lie bialgebra and $t \in L^{\otimes 2}$ be an element such that $\alpha^{\otimes 2}(t) = t$, $t_{21} = -t$, and
\begin{equation}
\label{eq:perturbedjacobi}
\alpha^{\otimes 3}\left(\ad_x([[t,t]]^\alpha + \cyclicsum (\alpha \otimes \Delta)(t))\right) = 0
\end{equation}
for all $x \in L$.  Define the perturbed cobracket $\Delta_t = \Delta + \ad(t)$.  Then $L_t = (L,[-,-],\Delta_t,\alpha)$ is a Hom-Lie bialgebra.
\end{theorem}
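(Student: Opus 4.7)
The plan is to verify the four defining axioms of a Hom-Lie bialgebra for $(L,[-,-],\Delta_t,\alpha)$, namely (i) $\Delta_t \circ \alpha = \alpha^{\otimes 2} \circ \Delta_t$, (ii) $\tau \circ \Delta_t = -\Delta_t$, (iii) the Hom-co-Jacobi identity, and (iv) the compatibility condition \eqref{eq:compatibility}. Since $\Delta_t = \Delta + \ad(t)$ and each of (i), (ii), (iv) is linear in the cobracket, it suffices to check them separately for $\Delta$ (which holds because $L$ is already a Hom-Lie bialgebra) and for $\ad(t)$.

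First I would dispose of (i) and (ii) together with (iv). For $\ad(t)$, co-multiplicativity and anti-symmetry follow from $\alpha^{\otimes 2}(t) = t$ and $t_{21} = -t$ by exactly the computations carried out for $\ad(r)$ in the proof of Theorem~\ref{thm:cob-char} (the first two paragraphs, together with \eqref{deltaantisym}). For (iv), the cocycle identity for $\ad(t)$ is precisely the content of Lemma~\ref{lem:perturb} (with $r$ replaced by $t$, using $\alpha^{\otimes 2}(t) = t$), so by linearity $\Delta_t$ inherits \eqref{eq:compatibility} from $\Delta$.

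The only nontrivial axiom is Hom-co-Jacobi for $\Delta_t$. Expanding
\[
(\alpha \otimes \Delta_t)\circ \Delta_t = (\alpha \otimes \Delta)\circ\Delta + (\alpha \otimes \ad(t))\circ \ad(t) + (\alpha \otimes \ad(t))\circ \Delta + (\alpha \otimes \Delta)\circ \ad(t)
\]
and cyclically symmetrizing, the first piece vanishes by Hom-co-Jacobi for $\Delta$, while the second piece contributes $\alpha^{\otimes 3}(\ad_x([[t,t]]^\alpha))$ by \eqref{cojacobi}--\eqref{cojacobiad} applied with $r$ replaced by $t$ (Lemma~\ref{lem:ABCD} only uses $\alpha^{\otimes 2}(t)=t$ and the Hom-Jacobi and anti-symmetry of $[-,-]$, so the substitution is legitimate). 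The crux is therefore to establish the cross-term identity
\[
\cyclicsum\bigl\{(\alpha \otimes \ad(t))(\Delta(x)) + (\alpha \otimes \Delta)(\ad_x(t))\bigr\} = \alpha^{\otimes 3}\bigl(\ad_x(\cyclicsum (\alpha \otimes \Delta)(t))\bigr).
\]
Granted this identity, summing the three contributions produces exactly $\alpha^{\otimes 3}(\ad_x([[t,t]]^\alpha + \cyclicsum(\alpha \otimes \Delta)(t)))$, which is zero by the hypothesis \eqref{eq:perturbedjacobi}, finishing the argument.

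The main obstacle is this cross-term identity. My plan is to expand both sides in Sweedler notation with $\Delta(x) = x_1 \otimes x_2$, $t = t_1 \otimes t_2$, and $\Delta(t_i) = (t_i)_1 \otimes (t_i)_2$, and then apply the compatibility condition \eqref{eq:compat} in $L$ to rewrite $\Delta([x,t_1])$ and $\Delta([x,t_2])$ arising in the second cyclic sum. The compatibility splits each such term into an $\ad_{\alpha(x)}(\Delta(t_i))$-piece, which after cyclic symmetrization and use of $\alpha^{\otimes 2}(t) = t$ reassembles into the right-hand side $\alpha^{\otimes 3}(\ad_x((\alpha \otimes \Delta)(t)))$, and an $\ad_{\alpha(t_i)}(\Delta(x))$-piece, which cancels precisely against the first cyclic sum after relabeling via $t_{21} = -t$ and cyclically rotating. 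The bookkeeping is nontrivial but parallels the expansion in Lemma~\ref{lem:ABCD}; indeed the identity can be viewed as the bilinearization of that lemma, measuring the failure of $\delta^1_{HL}$ to annihilate the product of the cocycle $\Delta$ with the coboundary $\ad(t)$.
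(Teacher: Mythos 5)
Your proposal is correct and follows essentially the same route as the paper: the first three axioms are handled by linearity together with the computations from Theorem~\ref{thm:cob-char} and Lemma~\ref{lem:perturb}, and the Hom-co-Jacobi identity reduces, via \eqref{cojacobi}--\eqref{cojacobiad} applied to $t$, to exactly the cross-term identity \eqref{hcj5}, which the paper proves as Lemma~\ref{lem2:perturb} by the same mechanism you describe (splitting $\Delta([x,t_2])$ via the compatibility condition, cancelling the $\ad_{\alpha(t_2)}(\Delta(x))$-pieces against $\cyclicsum(\alpha\otimes\ad(t))(\Delta(x))$ using $t_{21}=-t$ and the cyclic sum, and reassembling the rest with $\alpha^{\otimes 2}(t)=t$). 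The only cosmetic slip is that $\Delta([x,t_1])$ never actually occurs in the expansion --- only $\Delta([x,t_2])$ and $\Delta(\alpha(t_2))$ do --- but this does not affect the argument.
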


\begin{proof}
Since $(L,[-,-],\alpha)$ is a Hom-Lie algebra, to show that $L_t$ is a Hom-Lie bialgebra, we need to prove four things: (i) $\alpha^{\otimes 2} \circ \Delta_t = \Delta_t \circ \alpha$, (ii) $\Delta_t$ is anti-symmetric, (iii) the compatibility condition \eqref{eq:compatibility} holds for $\Delta_t$ and $[-,-]$, and (iv)  $\Delta_t$ satisfies the Hom-co-Jacobi identity (Definition ~\ref{def:hlc}).  We will reuse part of the proof of Theorem ~\ref{thm:cob-char}.

For (i), we know that $\alpha$ commutes with $\ad(t)$, which was established in the second paragraph in the proof of Theorem ~\ref{thm:cob-char}.  Since $\alpha$ commutes with $\Delta$ already, we conclude that it commutes with $\Delta_t = \Delta + \ad(t)$ as well, proving (i).

Next we consider (ii), the anti-symmetry of $\Delta_t$.  Since $\Delta$ is already anti-symmetric, $\Delta_t$ is anti-symmetric if and only if $\ad(t)$ is so.  As we already proved in \eqref{deltaantisym}, the anti-symmetry of $\ad(t)$ follows from the assumption $t + t_{21} = 0$.

Now we explain why (iii) (the compatibility condition \eqref{eq:compatibility} for $\Delta_t$ and $[-,-]$) holds.  We need to show that
\begin{equation}
\label{eq:deltatxy}
\Delta_t([x,y]) = |\ad_{\alpha(x)}(\Delta_t(y))|.
\end{equation}
Since $\Delta_t = \Delta + \ad(t)$, \eqref{eq:deltatxy} is equivalent to
\[
\begin{split}
\Delta([x,y]) + \ad_{[x,y]}(t) &= \left|\ad_{\alpha(x)}(\Delta(y)) + \ad_{\alpha(x)}(\ad_y(t))\right|\\
&= \left|\ad_{\alpha(x)}(\Delta(y))\right| + \left|\ad_{\alpha(x)}(\ad_y(t))\right|
\end{split}
\]
Since $\Delta([x,y]) = \left|\ad_{\alpha(x)}(\Delta(y))\right|$ (because $L$ is a Hom-Lie bialgebra), \eqref{eq:deltatxy} is equivalent to
\begin{equation}
\label{adxy}
\ad_{[x,y]}(t) = \left|\ad_{\alpha(x)}(\ad_y(t))\right|,
\end{equation}
which holds by Lemma ~\ref{lem:perturb}.

Finally, we consider (iv), the Hom-co-Jacobi identity of $\Delta_t$, which states
\begin{equation}
\label{hcj}
\cyclicsum (\alpha \otimes \Delta_t)(\Delta_t(x)) = 0
\end{equation}
for all $x \in L$.  Using the definition $\Delta_t = \Delta + \ad(t)$, we can rewrite \eqref{hcj} as
\begin{equation}
\label{hcj2}
0 = \cyclicsum \left\{(\alpha \otimes \Delta)(\Delta(x)) + (\alpha \otimes \ad(t))(\Delta(x)) + (\alpha \otimes \Delta)(\ad_x(t)) + (\alpha \otimes \ad(t))(\ad_x(t))\right\}.
\end{equation}
We already know that $\cyclicsum (\alpha \otimes \Delta)(\Delta(x)) = 0$, which is the Hom-co-Jacobi identity of $\Delta$.  Moreover, in \eqref{cojacobi} and \eqref{cojacobiad} (in the proof of Theorem ~\ref{thm:cob-char} with $t$ instead of $r$), we already showed that
\begin{equation}
\label{hcj3}
\cyclicsum (\alpha \otimes \ad(t))(\ad_x(t)) = \alpha^{\otimes 3}(\ad_x([[t,t]]^\alpha)).
\end{equation}
In view of \eqref{hcj2} and \eqref{hcj3}, the Hom-co-Jacobi identity of $\Delta_t$ \eqref{hcj} is equivalent to
\begin{equation}
\label{hcj4}
0 = \alpha^{\otimes 3}(\ad_x([[t,t]]^\alpha)) + \cyclicsum \left\{(\alpha \otimes \ad(t))(\Delta(x)) + (\alpha \otimes \Delta)(\ad_x(t))\right\}.
\end{equation}
Using the assumption ~\eqref{eq:perturbedjacobi}, the condition \eqref{hcj4} is equivalent to
\begin{equation}
\label{hcj5}
\cyclicsum \left\{(\alpha \otimes \ad(t))(\Delta(x)) + (\alpha \otimes \Delta)(\ad_x(t))\right\}
= \alpha^{\otimes 3}\left(\cyclicsum \ad_x((\alpha \otimes \Delta)(t))\right).
\end{equation}
We will prove \eqref{hcj5} in Lemma ~\ref{lem2:perturb} below.

The proof of Theorem \ref{thm:perturb} will be complete once we prove Lemma ~\ref{lem2:perturb}.
\end{proof}

\begin{lemma}
\label{lem2:perturb}
The condition \eqref{hcj5} holds.
\end{lemma}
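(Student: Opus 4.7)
The plan is to expand both sides of \eqref{hcj5} in Sweedler notation, use the Hom-Lie bialgebra compatibility \eqref{eq:compatibility} to handle the one awkward term $\Delta([x,t_2])$ that appears, and then reduce to an equality whose cyclic sum vanishes by the three anti-symmetries at hand.

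First I would expand $(\alpha \otimes \Delta)(\ad_x(t))$. Writing $\ad_x(t) = [x,t_1]\otimes \alpha(t_2) + \alpha(t_1)\otimes [x,t_2]$ and using $\Delta\alpha = \alpha^{\otimes 2}\Delta$, this equals
\[
\alpha([x,t_1])\otimes \alpha^{\otimes 2}\Delta(t_2) \;+\; \alpha^2(t_1)\otimes \Delta([x,t_2]).
\]
Now I would apply \eqref{eq:compatibility} to rewrite $\Delta([x,t_2]) = \ad_{\alpha(x)}\Delta(t_2) - \ad_{\alpha(t_2)}\Delta(x)$, which splits the second summand into a ``good'' piece and the ``bad'' piece $-\alpha^2(t_1)\otimes \ad_{\alpha(t_2)}\Delta(x)$.

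Next, the key observation is that $\xi := (\alpha \otimes \Delta)(t) \in L^{\otimes 3}$ is fixed by $\alpha^{\otimes 3}$: indeed
\[
\alpha^{\otimes 3}\bigl((\alpha\otimes \Delta)(t)\bigr) = (\alpha^2 \otimes \Delta\alpha)(t) = (\alpha \otimes \Delta)\bigl(\alpha^{\otimes 2}(t)\bigr) = (\alpha \otimes \Delta)(t),
\]
using $\Delta\alpha = \alpha^{\otimes 2}\Delta$ and the assumption $\alpha^{\otimes 2}(t) = t$. Combined with the routine identity $\alpha^{\otimes 3}(\ad_x \xi) = \ad_{\alpha(x)}(\alpha^{\otimes 3}\xi)$, this gives
\[
\alpha^{\otimes 3}\bigl(\ad_x((\alpha \otimes \Delta)(t))\bigr) = \ad_{\alpha(x)}\bigl((\alpha \otimes \Delta)(t)\bigr).
\]
Expanding the right-hand side as a three-term sum via the definition \eqref{eq:ad} shows that it is exactly $\alpha([x,t_1])\otimes \alpha^{\otimes 2}\Delta(t_2) + \alpha^2(t_1) \otimes \ad_{\alpha(x)}\Delta(t_2)$, i.e.\ the ``good'' pieces isolated above. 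Thus \eqref{hcj5} reduces to
\[
\cyclicsum\Bigl\{(\alpha \otimes \ad(t))\Delta(x) \;-\; \alpha^2(t_1)\otimes \ad_{\alpha(t_2)}\Delta(x)\Bigr\} = 0.
\]

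Finally, I would unpack this into four terms. Using $\alpha^{\otimes 2}(t) = t$ in the form $\sum \alpha^2(t_1) \otimes \alpha(t_2) = \sum \alpha(t_1)\otimes t_2$ (obtained by applying a suitable linear map to both sides of $\alpha^{\otimes 2}(t) = t$), the four terms become
\begin{align*}
u_1 &= \alpha(x_1)\otimes[x_2,t_1]\otimes\alpha(t_2), & u_2 &= \alpha(x_1)\otimes \alpha(t_1)\otimes[x_2,t_2],\\
v_4 &= -\alpha(t_1)\otimes[t_2,x_1]\otimes\alpha(x_2), & v_5 &= -\alpha(t_1)\otimes\alpha(x_1)\otimes[t_2,x_2].
\end{align*}
Let $\sigma$ denote the cyclic shift. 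Using anti-symmetry of the bracket on $v_5$, then anti-symmetry of $t$, yields $v_5 = -\alpha(t_2)\otimes\alpha(x_1)\otimes[x_2,t_1] = -\sigma(u_1)$. Similarly, using anti-symmetry of the bracket on $v_4$ and then anti-symmetry of $\Delta$ (that is, swapping the Sweedler indices $x_1 \leftrightarrow x_2$ with a sign change) yields $v_4 = -\alpha(t_1)\otimes[x_2,t_2]\otimes\alpha(x_1) = -\sigma^2(u_2)$. Since $\cyclicsum$ is invariant under $\sigma$, each pair $u_i + v_j'$ contributes zero to the cyclic sum, proving \eqref{hcj5}.

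The main obstacle is bookkeeping: tracking the correct placement of the $\alpha$'s so that the substitution $\alpha^{\otimes 2}(t)=t$ genuinely equalizes the $\alpha$-patterns of $u_1,u_2$ with those of $v_4,v_5$, while simultaneously juggling the three anti-symmetries (of $[-,-]$, of $t$, and of $\Delta$) in the precise order that realizes the pairings $v_5 = -\sigma(u_1)$ and $v_4 = -\sigma^2(u_2)$.
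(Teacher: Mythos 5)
Your proof is correct and follows essentially the same route as the paper's: expand the left-hand side in Sweedler notation, apply the compatibility condition \eqref{eq:compatibility} to $\Delta([x,t_2])$, cancel the cross terms inside the cyclic sum using the anti-symmetries of $[-,-]$, $t$, and $\Delta$ together with $\alpha^{\otimes 2}(t)=t$, and match the surviving terms with $\alpha^{\otimes 3}\bigl(\cyclicsum\ad_x((\alpha\otimes\Delta)(t))\bigr)$. Your observation that $(\alpha\otimes\Delta)(t)$ is fixed by $\alpha^{\otimes 3}$, so the right-hand side equals $\cyclicsum\ad_{\alpha(x)}\bigl((\alpha\otimes\Delta)(t)\bigr)$, is a slightly cleaner way to organize the final identification, and your explicit pairings $v_5=-\sigma(u_1)$, $v_4=-\sigma^2(u_2)$ make precise the cancellation the paper only asserts.
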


\begin{proof}
Write $\Delta(x) = \sum x_1 \otimes x_2$.  Then the left-hand side of \eqref{hcj5} is:
\begin{subequations}
\allowdisplaybreaks
\begin{align*}
& \cyclicsum \left\{(\alpha \otimes \ad(t))(\Delta(x)) + (\alpha \otimes \Delta)(\ad_x(t))\right\}\\
&= \cyclicsum \left\{\alpha(x_1) \otimes \ad_{x_2}(t_1\otimes t_2) + (\alpha \otimes \Delta)([x,t_1] \otimes \alpha(t_2) + \alpha(t_1) \otimes [x,t_2])\right\}\\
&= \cyclicsum \left\{\alpha(x_1) \otimes [x_2,t_1] \otimes \alpha(t_2) + \alpha(x_1) \otimes \alpha(t_1) \otimes [x_2,t_2]\right\}\\
&\relphantom{} + \cyclicsum \left\{\alpha([x,t_1]) \otimes \Delta(\alpha(t_2)) + \alpha^2(t_1) \otimes \Delta([x,t_2])\right\}\\
\intertext{Write $\Delta(t_2) = \sum t_2' \otimes t_2''$.  Recall that $\Delta([x,t_2]) = \ad_{\alpha(x)}(\Delta(t_2)) - \ad_{\alpha(t_2)}(\Delta(x))$ \eqref{eq:compatibility} (because $L$ is a Hom-Lie bialgebra).  Making use of the fact that we have a cyclic sum, we can continue the above computation as follows:}
&= \cyclicsum \left\{\alpha(x_1) \otimes [x_2,t_1] \otimes \alpha(t_2) + \alpha(x_1) \otimes \alpha(t_1) \otimes [x_2,t_2] + \alpha([x,t_1]) \otimes \alpha^{\otimes 2}(\Delta(t_2))\right\}\\
&\relphantom{} + \cyclicsum \left\{\alpha^2(t_1) \otimes [\alpha(x),t_2'] \otimes \alpha(t_2'') + \alpha^2(t_1) \otimes \alpha(t_2') \otimes [\alpha(x),t_2'']\right\}\\
&\relphantom{} - \cyclicsum \left\{\alpha(x_2) \otimes \alpha^2(t_1) \otimes [\alpha(t_2),x_1] + \alpha(x_1) \otimes [\alpha(t_2),x_2] \otimes \alpha^2(t_1)\right\}\\
\intertext{It follows from the anti-symmetry of $\Delta$ applied to $x$ (i.e., $\sum x_2 \otimes x_1 = - \sum x_1 \otimes x_2$), $t_{21} = -t$, and $\alpha^{\otimes 2}(t) = t$ that the first two terms and the last two terms above cancel out.  Using the commutation of $\alpha$ with $[-,-]$ and $\Delta$ and $\alpha^{\otimes 2}(t) = t$, the above computation continues as follows:}
&= \cyclicsum \left\{\alpha([x,t_1]) \otimes \alpha^{\otimes 2}(\Delta(t_2)) + \alpha^2(t_1) \otimes [\alpha(x),t_2'] \otimes \alpha(t_2'') + \alpha^2(t_1) \otimes \alpha(t_2') \otimes [\alpha(x),t_2'']\right\}\\
&= \cyclicsum \left\{\alpha([x,\alpha(t_1)]) \otimes \alpha^{\otimes 2}(\Delta(\alpha(t_2))) + \alpha^3(t_1) \otimes [\alpha(x),\alpha(t_2')] \otimes \alpha^2(t_2'')\right\}\\
&\relphantom{} + \cyclicsum \left\{\alpha^3(t_1) \otimes \alpha^2(t_2') \otimes [\alpha(x),\alpha(t_2'')]\right\}\\
&= \alpha^{\otimes 3}\left(\cyclicsum \left\{[x,\alpha(t_1)] \otimes \Delta(\alpha(t_2)) + \alpha^2(t_1) \otimes [x,t_2'] \otimes \alpha(t_2'') + \alpha^2(t_1) \otimes \alpha(t_2') \otimes [x,t_2'']\right\}\right)\\
&= \alpha^{\otimes 3}\left(\cyclicsum \ad_x\left(\alpha(t_1) \otimes t_2' \otimes t_2''\right)\right)\\
&= \alpha^{\otimes 3}\left(\cyclicsum \ad_x((\alpha\otimes\Delta)(t))\right).
\end{align*}
\end{subequations}
This proves \eqref{hcj5}.
\end{proof}

The following result is a special case of the previous Theorem.

\begin{corollary}
\label{cor1:perturb}
Let $(L,[-,-],\Delta,\alpha)$ be a Hom-Lie bialgebra and $t \in L^{\otimes 2}$ be an element such that $\alpha^{\otimes 2}(t) = t$, $t_{21} = -t$, and $[[t,t]]^\alpha + \cyclicsum (\alpha \otimes \Delta)(t) = 0$.  Then $L_t = (L,[-,-],\Delta_t = \Delta + \ad(t),\alpha)$ is a Hom-Lie bialgebra.
\end{corollary}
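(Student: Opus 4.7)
The plan is to deduce Corollary \ref{cor1:perturb} as an immediate special case of Theorem \ref{thm:perturb}. The hypotheses $\alpha^{\otimes 2}(t) = t$ and $t_{21} = -t$ match verbatim, so the only thing to verify is that the cyclic-sum hypothesis here implies the weaker condition \eqref{eq:perturbedjacobi} used in the theorem.

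First I would observe that the element $[[t,t]]^\alpha + \cyclicsum (\alpha \otimes \Delta)(t) \in L^{\otimes 3}$ is being assumed to vanish identically. In particular, for every $x \in L$, the linear map $\ad_x \colon L^{\otimes 3} \to L^{\otimes 3}$ sends this element to $0$, and then applying the linear map $\alpha^{\otimes 3}$ also yields $0$. This is precisely the hypothesis \eqref{eq:perturbedjacobi} of Theorem \ref{thm:perturb}.

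Having verified all three hypotheses of Theorem \ref{thm:perturb}, I would invoke that theorem to conclude that $L_t = (L,[-,-],\Delta_t = \Delta + \ad(t),\alpha)$ is a Hom-Lie bialgebra, which is exactly the statement of the corollary. There is no real obstacle here: the corollary is strictly a corollary, trading the pointwise/kernel condition \eqref{eq:perturbedjacobi} for the clean and more easily checked requirement that the offending $3$-tensor vanishes outright. The only thing to flag for the reader is why this implication goes through, namely the linearity of $\ad_x$ and $\alpha^{\otimes 3}$.
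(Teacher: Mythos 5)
Your proposal is correct and matches the paper's treatment: the paper simply states that Corollary \ref{cor1:perturb} is a special case of Theorem \ref{thm:perturb}, and your observation that the vanishing of the $3$-tensor $[[t,t]]^\alpha + \cyclicsum(\alpha\otimes\Delta)(t)$ forces \eqref{eq:perturbedjacobi} by linearity of $\ad_x$ and $\alpha^{\otimes 3}$ is exactly the (trivial) verification being left implicit there.
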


The following result gives sufficient conditions under which the cobracket in a quasi-triangular Hom-Lie bialgebra (Definition ~\ref{def:coboundary}) can be perturbed to give another quasi-triangular Hom-Lie bialgebra.

\begin{corollary}
\label{cor:perturb}
Let $(L,[-,-],\Delta=\ad(r),\alpha,r)$ be a quasi-triangular Hom-Lie bialgebra and $t \in L^{\otimes 2}$ be an element such that $\alpha^{\otimes 2}(t) = t$, $t_{21} = -t$,
\[
[[t,t]]^\alpha + \cyclicsum (\alpha \otimes \Delta)(t)) = 0, \quad\text{and}\quad [[r,t]]^\alpha + [[t,r]]^\alpha + [[t,t]]^\alpha = 0.
\]
Then $L_t = (L,[-,-],\Delta_t = \ad(r+t),\alpha,r+t)$ is a quasi-triangular Hom-Lie bialgebra.
\end{corollary}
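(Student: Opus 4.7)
The plan is to deduce this result from Corollary \ref{cor1:perturb} together with a direct bilinear expansion of the CHYBE. First I observe that the hypotheses on $t$ are tailored to apply Corollary \ref{cor1:perturb}: we are given $\alpha^{\otimes 2}(t) = t$, $t_{21} = -t$, and $[[t,t]]^\alpha + \cyclicsum (\alpha \otimes \Delta)(t) = 0$. Hence Corollary \ref{cor1:perturb} immediately produces a Hom-Lie bialgebra $(L, [-,-], \Delta + \ad(t), \alpha)$. Since $\Delta = \ad(r)$ and the adjoint map is linear in its defining element (directly from \eqref{eq:ad}), we have $\Delta + \ad(t) = \ad(r) + \ad(t) = \ad(r+t)$, so $\Delta_t = \ad(r+t)$ as asserted.

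Next I need to verify that $(L, [-,-], \Delta_t, \alpha, r+t)$ satisfies the coboundary conditions of Definition \ref{def:coboundary}. The fixed-point condition $\alpha^{\otimes 2}(r+t) = r+t$ follows from $\alpha^{\otimes 2}(r) = r$ (given since $L$ is quasi-triangular) and $\alpha^{\otimes 2}(t) = t$ combined with linearity of $\alpha^{\otimes 2}$. The cobracket equation $\Delta_t = \ad(r+t)$ is the identity just noted. Thus $(L, [-,-], \Delta_t, \alpha, r+t)$ is a coboundary Hom-Lie bialgebra.

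The remaining step is to check that $r+t$ solves the CHYBE, i.e.\ that $[[r+t, r+t]]^\alpha = 0$. The three component brackets $[\cdot_{12}, \cdot_{13}]$, $[\cdot_{12}, \cdot_{23}]$, $[\cdot_{13}, \cdot_{23}]$ defined in \eqref{eq:r123} are each bilinear in their two arguments (one Lie bracket and two applications of $\alpha$, all linear). Therefore
\[
[[r+t, r+t]]^\alpha = [[r,r]]^\alpha + [[r,t]]^\alpha + [[t,r]]^\alpha + [[t,t]]^\alpha,
\]
where $[[r,t]]^\alpha$ and $[[t,r]]^\alpha$ are defined by the obvious mixed versions of \eqref{eq:chybe}. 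Because $L$ is quasi-triangular, $[[r,r]]^\alpha = 0$, and by the final hypothesis the remaining three terms cancel. Hence $[[r+t,r+t]]^\alpha = 0$, and $L_t$ is a quasi-triangular Hom-Lie bialgebra.

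There is no genuine obstacle here; the only minor point to make explicit is the bilinear expansion of $[[\cdot,\cdot]]^\alpha$, which is immediate from the componentwise definitions in \eqref{eq:r123}. In short, the heavy lifting was already done in Corollary \ref{cor1:perturb} (for the Hom-Lie bialgebra structure) and in the definition of the CHYBE (for the classical $r$-matrix condition); the corollary combines them.
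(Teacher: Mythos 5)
Your proposal is correct and follows essentially the same route as the paper: apply Corollary \ref{cor1:perturb} to get the Hom-Lie bialgebra structure, note $\ad(r)+\ad(t)=\ad(r+t)$ and $\alpha^{\otimes 2}(r+t)=r+t$ for the coboundary condition, and expand $[[r+t,r+t]]^\alpha$ bilinearly so that $[[r,r]]^\alpha=0$ and the last hypothesis kill all four terms. Nothing is missing.
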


\begin{proof}
Indeed, Corollary ~\ref{cor1:perturb} implies that $L_t$ is a coboundary Hom-Lie bialgebra (Definition ~\ref{def:coboundary}), since $\ad(r) + \ad(t) = \ad(r+t)$ and $\alpha^{\otimes 2}(r + t) = \alpha^{\otimes 2}(r) + \alpha^{\otimes 2}(t) = r+t$.  The sum $r+t$ satisfies the CHYBE \eqref{eq:chybe} because
\[
[[r+t,r+t]]^\alpha = [[r,r]]^\alpha + [[r,t]]^\alpha + [[t,r]]^\alpha + [[t,t]]^\alpha
\]
and $r$ satisfies the CHYBE (i.e., $[[r,r]]^\alpha = 0$).
\end{proof}

Let us give an interpretation of the previous Corollary.  Define a \textbf{triangular Hom-Lie bialgebra} as a quasi-triangular Hom-Lie bialgebra $(L,[-,-],\Delta=\ad(t),\alpha,t)$ (Definition ~\ref{def:coboundary}) in which $t$ is anti-symmetric (i.e., $t_{21} = -t$).  A triangular Hom-Lie bialgebra with $\alpha = Id$ is exactly a triangular Lie bialgebra, as defined by Drinfel'd \cite{dri87}.  Corollary ~\ref{cor:perturb} implies that every triangular Hom-Lie bialgebra is obtained as a perturbation of the trivial cobracket $\Delta = \ad(0)$.  Of course, one can infer this fact from Corollary ~\ref{cor:cob} as well.


\end{document}